\theoremstyle{plain}
\newtheorem{Th}{Theorem}[section]
\newtheorem{Cor}[Th]{Corollary}
\newtheorem{Lem}[Th]{Lemma}
\newtheorem{Prop}[Th]{Proposition}
\theoremstyle{definition}
\theoremstyle{remark}
\newtheorem*{Rem}{Remark}
\numberwithin{equation}{section}
\newcommand{\PP}{{\mathbb P}}
\newcommand{\DD}{{\mathbb D}}
\newcommand{\ZZ}{{\mathbb Z}}
\newcommand{\VV}{{\mathbb V}}
\newcommand{\RR}{{\mathbb R}}
\begin{document}

\title[The affine Weyl group symmetry of Desargues maps]
{The affine Weyl group symmetry of Desargues maps \\
and of the non-commutative
Hirota-Miwa system}

\author{Adam Doliwa}

\address{Faculty of Mathematics and Computer Science, University of Warmia and
Mazury,
ul.~\.{Z}o{\l}\-nierska~14, 10-561 Olsztyn, Poland}

\email{doliwa@matman.uwm.edu.pl}

\date{}
\keywords{integrable discrete geometry; Hirota--Miwa system;
affine Weyl group; generalized Desargues configurations}
\subjclass[2010]{Primary 37K10; Secondary 39A14, 37K60, 51A20, 20F55}

\begin{abstract}
We study recently introduced Desargues maps, which provide simple
geometric interpretation of the non-commutative
Hirota--Miwa system. We characterize
them as maps of the
$A$-type root lattice into a projective space such that images of vertices of any
basic regular $N$-simplex are collinear. Such a characterization is 
manifestly invariant with
respect to the corresponding affine Weyl group action, which leads to 
related
symmetries of the Hirota--Miwa system. 
\end{abstract}
\maketitle

\section{Introduction}
The Desargues maps, as defined in \cite{Dol-Des}, are maps 
$\phi:\ZZ^N\to\PP^M(\DD)$ of multidimensional integer lattice into 
projective space of dimension $M\geq 2$ over a division ring $\DD$, 
such that for any pair of indices 
$i\ne j$ the points $\phi(n)$, $\phi(n+\boldsymbol{\varepsilon}_i)$ and 
$\phi(n+\boldsymbol{\varepsilon}_j)$ are collinear; here 
$\boldsymbol{\varepsilon}_i = (0, \dots, \stackrel{i}{1}, \dots ,0)$
is the $i$-th element of the canonical basis of $\RR^N$. 
Under mild genericity conditions, by an appropriate choice of homogeneous
coordinates, such maps are described 
\cite{Dol-Des} in terms of solutions 
$\boldsymbol{\phi}:\ZZ^N\to\DD^{M+1}_*$ of the 
linear system (we consider the right vector spaces over $\DD$)
\begin{equation} \label{eq:lin-dKP}
\boldsymbol{\phi}(n+\boldsymbol{\varepsilon}_i) - 
\boldsymbol{\phi}(n+\boldsymbol{\varepsilon}_j) =  
\boldsymbol{\phi}(n) U_{ij}(n),  
\qquad i \ne j \leq N,
\end{equation}
with the corresponding functions $U_{ij}:\ZZ^N\to\DD_*$. 

The linear system \eqref{eq:lin-dKP} is well known in soliton
theory~\cite{DJM-II,Nimmo-NCKP}. Its compatibility condition
is the following nonlinear system 
\begin{align} \label{eq:alg-comp-U}
& U_{ij}(n) + U_{ji}(n) = 0,  \qquad  U_{ij}(n) + U_{jk}(n) + U_{ki}(n) = 0,\\
& \label{eq:U-rho} 
U_{kj}(n)U_{ki}(n+\boldsymbol{\varepsilon}_j) = 
U_{ki}(n) U_{kj}(n+\boldsymbol{\varepsilon}_i),
\end{align}
for distinct triples $i,j,k$, called the non-commutative Hirota--Miwa system
\cite{FWN,FWN-Capel,Nimmo-NCKP}. Equation \eqref{eq:U-rho} allows to introduce 
the potentials $\rho_i:\ZZ^N\to\DD_*$ such that
\begin{equation} \label{eq:def-rho}
U_{ij}(n) = \left[ \rho_i(n)\right]^{-1}
 \rho_i(n+\boldsymbol{\varepsilon}_j).
\end{equation}
When $\DD$ is commutative, i.e. a field, the functions $\rho_i$ can 
be parametrized in terms of a 
single potential $\tau$ (the tau-function)
\begin{equation} \label{eq:r-tau}
\rho_i(n) = (-1)^{\sum_{k>i}n_k}
\frac{\tau(n+\boldsymbol{\varepsilon}_i)}{\tau(n)}.
\end{equation}
Then equations \eqref{eq:alg-comp-U} can be rewritten as the Hirota--Miwa
\cite{Hirota,Miwa} system (called also
the discrete Kadomtsev--Petviashvili (KP) system) 
\begin{equation} \label{eq:H-M}
\tau(n+\boldsymbol{\varepsilon}_i)
\tau(n+\boldsymbol{\varepsilon}_j+\boldsymbol{\varepsilon}_k) - 
\tau(n+\boldsymbol{\varepsilon}_j)
\tau(n+\boldsymbol{\varepsilon}_i+\boldsymbol{\varepsilon}_k) +
\tau(n+\boldsymbol{\varepsilon}_k)
\tau(n+\boldsymbol{\varepsilon}_i+\boldsymbol{\varepsilon}_j) =0,
\qquad i< j < k,
\end{equation}
whose fundamental role in soliton theory
is described, for example, in~\cite{Miwa,KvL,BoKo-N-KP}. 

As it was shown in \cite{Dol-Des} the four dimensional compatibility of the
Desargues maps is equivalent to the celebrated Desargues theorem
\cite{BeukenhoutCameron-H} of the incidence geometry. However, the above
definition of the Desargues maps does not exhibit the well known
symmetry of the Desargues configuration~\cite{Levi}; see also discussion in \cite{Dol-Des}.
In this paper we propose more geometric definition of the Desargues maps, where
instead of the $\ZZ^N$ lattice we use the root lattice $Q(A_N)$. This point of
view allows to see, from the very beginning, the corresponding affine Weyl
group $W(A_N)$ symmetry of the Desargues maps and of the non-commutative
Hirota--Miwa system.

At this point we should mention a recent related
work~\cite{Schief-talk} of W.~K.~Schief, who found a symmetric description 
of the Laplace sequence of two dimensional quadrilateral 
lattices~\cite{Sauer2,DCN} in terms of certain maps 
of the three dimensional lattice of face-centered cubic
combinatorics, which is the root lattice $Q(A_3)$ 
\cite{Schief-private}; see also Section~\ref{sec:P(K-1,K)} for more details.  
We remark that in the integrable discrete geometry
the face-centered cubic closest sphere packing lattice has been 
used in \cite{DNS-4-7,Doliwa-NEEDS-2007} in a different context
of a sub-lattice version of Miwa's discrete BKP equation~\cite{Miwa}. The idea
of checking compatibility, on three dimensional lattices generated by
tetrahedra, of a system of linear equations on triangular 
lattices has been announced to me also by M.~Nieszporski~\cite{Nieszporski-priv}
in the context of a six-point linear problem~\cite{Nieszporski-6p}, 
which contains as reductions both the four point and the three point linear 
problems.

We remark that the $A_{K-1}$ root lattice served in \cite{DMMMS,TQL} as the
parametrization space of the Laplace transformations of $K$-dimensional 
quadrilateral lattices \cite{MQL}. In this context it is present explicitly in
\cite{Dol-Des}, where it was shown that the theory of $K$-dimensional
quadrilateral lattices and their Laplace transformations is the same as the
theory of $(2K-1)$-dimensional Desargues maps. In
describing this equivalence it was convenient to introduce implicitly
the $A_{2K-1}$ root lattice, but the full geometric flavor of this change
of variables has not been observed there.

Finally, it is worth to mention that our research has been also motivated by 
(extended) affine Weyl groups symmetries of discrete Painlev\'{e} 
systems~\cite{NY-AN,RGO,Noumi,KNY-qKP}. 
The geometric picture associated to the theory of  Painlev\'{e} equations 
is connected usually with the representation theory 
of the affine Weyl groups in terms of birational actions on rational
surfaces~\cite{Sakai}. It is well known 
\cite{AblowitzSegur,FWN-Pap,FWN-RGO,GrammaticosRamani,KNY-qKP} that the 
Painlev\'{e} type
equations can be obtained as symmetry reductions of soliton systems.
As the present work brings to the light the affine Weyl group symmetry already
on the level of the 
Hirota--Miwa system, it can be considered as a prologue to the incidence 
geometric description of the integrability of the Painlev\'{e} type equations.

Notice that, by multidimensional compatibility of the Desargues maps, the 
dimension $N$ of the root lattice
can be arbitrarily large. Therefore our paper provides 
geometric explanation of the appearance of the $A_\infty$ root lattice in the 
Kadomtsev--Petviashvili hierarchy, 
which is encoded \cite{Miwa} in the Hirota--Miwa system, other then the standard
one \cite{DKJM,JM} via the theory of representations of the Lie algebra 
$\mathfrak{gl}(\infty)$.

The construction of the paper is as follows. We collect first in
Section~\ref{sec:simplex-lattice} some useful facts on the $A_N$~root lattice and
the corresponding affine Weyl group. Then in Section~\ref{sec:H-M-aWg} we 
present the $A_N$ root lattice description of the Desargues maps and their 
corresponding affine Weyl group symmetry. Finally, in Section~\ref{sec:aW-HM}
we study the action of the $A_N$ affine Weyl group on solutions of the
non-commutative Hirota--Miwa system. 

We remark that we do not discuss here nor
the Darboux type transformations of the Desargues maps nor the corresponding
B\"{a}cklund type transformations of the Hirota--Miwa system. 
These results are known on the algebraic level
\cite{Nimmo-KP,GilsonNimmoOhta}, and their
incidence geometry interpretation follows from the known geometric meaning
\cite{TQL} of the corresponding transformations of the quadrilateral lattices.
Because also here, like in many integrable discrete systems \cite{LeBen}, there
is no essential difference between the Darboux transformation and 
a step into new
dimension of the lattice system, the permutability of the transformations
follows from the multidimensional compatibility of the Desargues
maps. The details are of no particular interest in the context of this paper 
and will be presented elsewhere.

\section{The $A_N$ root lattice and its affine Weyl group}
\label{sec:simplex-lattice}
In this Section we recall necessary facts on the 
$A_N$ root lattice, its Delaunay tiles and the
corresponding affine Weyl group action. The subject is fairly standard, see for
example \cite{Bourbaki,Coxeter,ConwaySloane,Humphreys,MoodyPatera}. Notice that in order
to simplify the presentation we adjusted some general formulas of the root
lattices theory to this specific case.
\subsection{The $A_N$ root lattice} \label{sec:AN-lattice}
The $N\geq 2$-dimensional root lattice $Q(A_N)$, where
the terminology comes from theory of
simple Lie algebras \cite{Bourbaki},
is generated by vectors along the
edges of regular $N$-simplex. If we take the vertices of the simplex to be 
the vectors of the canonical (and orthonormal with respect to the standard
scalar product) basis in $\RR^{N+1}$
\begin{equation*}
\boldsymbol{e}_i = (0,\dots , \stackrel{i}{1}, \dots , 0), \qquad 
1\leq i \leq N+1,
\end{equation*}
then the generators are 
\begin{equation} \label{eq:ep-ij}
\boldsymbol{\varepsilon}^i_j = \boldsymbol{e}_i - \boldsymbol{e}_j, \qquad
1\leq i \neq j \leq N+1,
\end{equation} 
which identifies the lattice as the set of all vectors $(m_1, \dots ,
m_{N+1}) \in \ZZ^{N+1}$ of integer coordinates with zero sum
$m_1 + \cdots + m_{N+1} = 0 $. We consider the $Q(A_N)$ 
lattice as embedded in
the $N$-dimensional vector space 
\begin{equation*}
\VV = \{ (x_1, \dots , x_{N+1}) \in \RR^{N+1}| \, x_1 + \cdots + x_{N+1} = 0 \}
\end{equation*}
with scalar product $(\cdot | \cdot)$ inherited from the ambient $\RR^{N+1}$.
The standard basis of the root
lattice, useful from the point of view the action of
the affine Weyl group (see Section~\ref{sec:aff-Weyl}), consists of the so 
called \emph{simple roots}
\begin{equation*}
\boldsymbol{\alpha}_i = \boldsymbol{e}_{i} - \boldsymbol{e}_{i+1}, 
\qquad 1\leq i \leq N.  
\end{equation*}

Define the \emph{fundamental weights} $\boldsymbol{\omega}_1, \dots ,
\boldsymbol{\omega}_N \in \VV$ as the dual to the simple root basis
\begin{equation*}
(\boldsymbol{\omega}_i | \boldsymbol{\alpha}_j ) = \delta_{ij}.
\end{equation*}
Vertices of the weight lattice $P(A_N)$,
\begin{equation*}
P(A_N) = \sum_{i=1}^N \ZZ \boldsymbol{\omega}_i ,
\end{equation*}
are the points of the root lattice $Q(A_N)$ and their translates by the
fundamental weights.

\subsection{The $A_N$ Weyl groups} \label{sec:aff-Weyl}
The \emph{Weyl group} $W_0(A_N)$ is the Coxeter group
generated by reflections $r_i$, $1\leq i \leq N$,
with respect of the
hyperplanes through the origin and orthogonal to the corresponding
simple roots 
\begin{equation} \label{eq:ri-action}
r_i:  \boldsymbol{v} \mapsto \boldsymbol{v} - 
2\frac{(\boldsymbol{v} | \boldsymbol{\alpha}_i )}
{(\boldsymbol{\alpha}_i | \boldsymbol{\alpha}_i )}
\boldsymbol{\alpha}_i.
\end{equation}
The group $W_0(A_N)$ is isomorphic to the symmetric group $S_{N+1}$ which act
permuting the vectors $\boldsymbol{e}_i$, $1\leq i \leq N+1$; the generators
$r_i$ are identified then with the transpositions $\sigma_i=(i,i+1)$. 

Denote by $\tilde{\boldsymbol{\alpha}}$ the highest root 
\begin{equation*} 
\tilde{\boldsymbol{\alpha}} = -\boldsymbol{\alpha}_0 =
\boldsymbol{\alpha}_1 + \cdots + 
\boldsymbol{\alpha}_N = 
\boldsymbol{e}_{1} - \boldsymbol{e}_{N+1}.
\end{equation*} 
The \emph{affine Weyl group} $W(A_N)$ is the Coxeter group
generated by $r_1, r_2, \dots , r_N$ and by an
additional affine reflection $r_0$
\begin{equation}\label{eq:r0-action}
r_0:  \boldsymbol{v} \mapsto \boldsymbol{v} - \left( 1 - 
2\frac{(\boldsymbol{v} | \tilde{\boldsymbol{\alpha}} )}
{(\tilde{\boldsymbol{\alpha}} | \tilde{\boldsymbol{\alpha}} )}\right) 
\tilde{\boldsymbol{\alpha}}.
\end{equation}
In more abstract terms the affine Weyl group $W(A_N)$ is defined by the
generators $r_0$, $r_1$,\dots, $r_N$ and the relations
\begin{equation*}
r_i^2 = 1, \qquad (r_i r_j)^2 = 1 \quad (j\neq i, i\pm 1), \qquad
(r_i r_j)^3 = 1 \quad (j= i\pm 1),
\end{equation*}
where indices are considered modulo $N+1$.

From \eqref{eq:ri-action}-\eqref{eq:r0-action} we can obtain the following
formulas, which we will need in Section~\ref{sec:aff-rho}:
\begin{equation} \label{eq:AN-action-edges}
r_i(n+\boldsymbol{\varepsilon}^j_{k}) = r_i(n) +  
\boldsymbol{\varepsilon}^{\sigma_i(j)}_{\sigma_i(k)},\qquad n\in Q(A_N),
\end{equation}
where $\sigma_0 = (1,N+1)$.
In particular 
\begin{equation} \label{eq:AN-action-edges-simple}
r_i(n+\boldsymbol{\alpha}_{j}) = r_i(n) + 
\boldsymbol{\alpha}_{j} - a_{ji} \boldsymbol{\alpha}_{i}, \qquad 0\leq
i,j\leq N , \qquad n\in Q(A_N),
\end{equation} 
where 
\begin{equation*}
a_{ij} = ( \boldsymbol{\alpha}_{i} | \boldsymbol{\alpha}_{j})
= \begin{cases}
2 & i=j, \\ -1 & j = i \pm 1 \quad \mod N+1, \\ 0 & \text{otherwise},
\end{cases}
\end{equation*} 
is the Cartan matrix of the affine Weyl group $W(A_N)$.

The convex hull of the vertices $\{0, \boldsymbol{\omega}_1 , \dots
\boldsymbol{\omega}_N \}$ is the \emph{fundamental region} of for the action of
$W(A_N)$ on $\VV$. 
The affine Weyl group is the semidirect product 
\begin{equation}
W(A_N) =  Q(A_N)\rtimes W_0(A_N)
\end{equation}
of the Weyl group by the translations along the root lattice with the action
given by \eqref{eq:ri-action}. 

\subsection{The Delaunay polytopes of the root lattice}

The \emph{holes} in the lattice are the points of $\VV$ that are
locally maximally distant from the lattice. The convex hull of the lattice
points closest to a hole is called the \emph{Delaunay
polytope}. The Delaunay polytopes of the root lattice $Q(A_N)$
form a tessellation 
of $\VV$ into $N$ convex polytopes $P(k,N)$, $k=1,\dots,N$, 
called \emph{ambo-simplices} in
\cite{ConwaySloane-misc} or regular \emph{hypersimplices} in \cite{Grunbaum}. 
\begin{figure}
\begin{center}
\includegraphics[width=8cm]{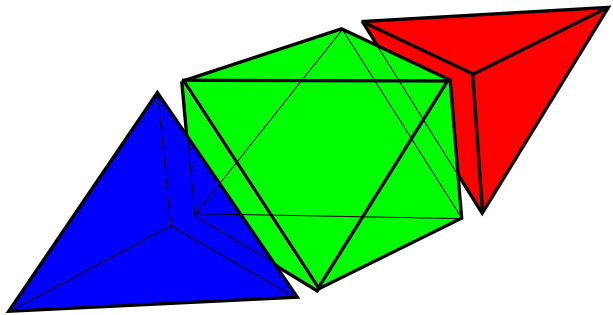}
\end{center}
\caption{The fundamental parallelogram of the $A_3$ root lattice 
decomposed into its Delaunay tiles: two tetrahedra $P(1,3)$ and $P(3,3)$, and
the octahedron $P(2,3)$.}
\label{fig:A3-cube}
\end{figure}
We remark that the tiles $P(1,N)$, congruent to the initial $N$-simplex of
Section~\ref{sec:AN-lattice}, are of particular interest in our paper. They 
will be called basic regular $N$-simplices of the lattice.

The points $Q(A_N)+\boldsymbol{\omega}_k$ are centers of
Delaunay tiles congruent to $P(k,N)$. Up to an appropriate affine 
transformation,
which sends the fundamental parallelogram of the $A_N$ root lattice into the 
standard unit $N$-hypercube $I_N=[0,1]^N$, the tile $P(k,N)$ can be identified
with the region (slice) of $I_N$ between two hyperplanes (see
Figure~\ref{fig:A3-cube})
\begin{equation*}
\hat{P}(k,N) = \{ (x_1, \dots , x_N) \in\RR^N | \; 0\leq x_1, \dots , x_N \leq 1, \;
k-1 \leq x_1 + \cdots + x_N  \leq k \}. 
\end{equation*}

The hypersimplex $P(k,N)$ can be equivalently (up to an affine transformation)
described as the section of the
hypercube $I_{N+1}$ by the corresponding hyperplane 
\begin{equation*}
\tilde{P}(k,N)= \{ (x_1, \dots , x_{N+1}) \in\RR^{N+1} | \; 
0\leq x_1, \dots , x_{N+1} \leq 1, \;
x_1 + \cdots + x_{N+1} = k \}, 
\end{equation*}
where the projection 
$(x_1, \dots , x_N, x_{N+1})\mapsto (x_1, \dots , x_N)$ sends
$\tilde{P}(k,N)$ to $\hat{P}(k,N)$. From the last description it follows that 
the $1$-skeleton of $P(k,N)$ is 
the so called Johnson graph $J(N+1,k)$: its vertices are labelled by 
$k$-point subsets of
$\{1,2, \dots , N+1\}$, and edges are the pairs of such sets with
$(k-1)$-point intersection. 

The following result
\cite{ConwaySloane,MoodyPatera}
is of the fundamental importance for our paper. 
\begin{Lem} \label{lem:Del-perm}
The affine Weyl group acts on the Delaunay tiling by permuting tiles within each
class $P(k,N)$.
\end{Lem}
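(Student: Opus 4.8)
The plan is to reduce the geometric statement to a purely algebraic fact about the quotient $P(A_N)/Q(A_N)$. First I would record that every generator of $W(A_N)$ is an affine isometry of $\VV$ preserving the lattice $Q(A_N)$: the reflections $r_i$ for $1\le i\le N$ fix the origin and, by \eqref{eq:ri-action}, send roots to roots, while the affine reflection $r_0$ of \eqref{eq:r0-action} is likewise a lattice isometry. Since the holes, and hence the Delaunay polytopes, are defined intrinsically from the lattice together with the Euclidean metric, any such isometry maps Delaunay tiles onto Delaunay tiles; being affine, it also carries the center of a tile to the center of its image. Thus $W(A_N)$ permutes the tiles of the tessellation, and the whole content of the lemma is that the congruence class label $k$ is preserved.

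Next I would encode the class by a residue modulo the root lattice. By the description recalled above, the tiles congruent to $P(k,N)$ are exactly those whose centers lie in the coset $\boldsymbol{\omega}_k+Q(A_N)$; together with $Q(A_N)$ itself these represent the $N+1$ distinct cosets forming $P(A_N)/Q(A_N)$, so the coset of a tile's center modulo $Q(A_N)$ determines $k$ uniquely. Consequently it suffices to prove that $W(A_N)$ acts trivially on $P(A_N)/Q(A_N)$, that is, $g(\boldsymbol{v})-\boldsymbol{v}\in Q(A_N)$ for all $g\in W(A_N)$ and all $\boldsymbol{v}\in P(A_N)$.

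I would verify this on the generators, exploiting the semidirect decomposition $W(A_N)=Q(A_N)\rtimes W_0(A_N)$. Translations $t_{\boldsymbol{\lambda}}$ with $\boldsymbol{\lambda}\in Q(A_N)$ obviously fix every coset. For a simple reflection, \eqref{eq:ri-action} with $(\boldsymbol{\alpha}_i|\boldsymbol{\alpha}_i)=2$ gives $r_i(\boldsymbol{v})-\boldsymbol{v}=-(\boldsymbol{v}|\boldsymbol{\alpha}_i)\boldsymbol{\alpha}_i$; since $(\boldsymbol{v}|\boldsymbol{\alpha}_i)\in\ZZ$ for $\boldsymbol{v}\in P(A_N)$ (because $(\boldsymbol{\omega}_j|\boldsymbol{\alpha}_i)=\delta_{ij}$) and $\boldsymbol{\alpha}_i\in Q(A_N)$, this difference lies in $Q(A_N)$, in particular $r_i(\boldsymbol{\omega}_k)=\boldsymbol{\omega}_k-\delta_{ik}\boldsymbol{\alpha}_i$. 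The same holds for $r_0$: using $(\tilde{\boldsymbol{\alpha}}|\tilde{\boldsymbol{\alpha}})=2$, formula \eqref{eq:r0-action} yields $r_0(\boldsymbol{v})-\boldsymbol{v}=\bigl((\boldsymbol{v}|\tilde{\boldsymbol{\alpha}})-1\bigr)\tilde{\boldsymbol{\alpha}}$, which is an integer multiple of the root $\tilde{\boldsymbol{\alpha}}\in Q(A_N)$ since $(\boldsymbol{v}|\tilde{\boldsymbol{\alpha}})\in\ZZ$ for $\boldsymbol{v}\in P(A_N)$. As all generators preserve each coset of $P(A_N)/Q(A_N)$, so does every element of $W(A_N)$. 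Combined with the first two paragraphs, a given $g\in W(A_N)$ sends a class-$k$ tile, whose center lies in $\boldsymbol{\omega}_k+Q(A_N)$, to a tile whose center again lies in $\boldsymbol{\omega}_k+Q(A_N)$, hence of class $k$.

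I expect the only delicate point to be the handling of the affine generator $r_0$: unlike the $r_i$ it does not fix the origin, so one must check separately that it still preserves $P(A_N)$ and acts trivially on $P(A_N)/Q(A_N)$, which is precisely where one uses that its non-linear part is translation by the root $\tilde{\boldsymbol{\alpha}}$ together with the integrality of $(\boldsymbol{v}|\tilde{\boldsymbol{\alpha}})$ on weights. The remaining ingredients — that a lattice-preserving isometry permutes Delaunay tiles and maps centers to centers, and that the $\boldsymbol{\omega}_k+Q(A_N)$ are genuinely distinct cosets — are standard and can be quoted from the references \cite{ConwaySloane,MoodyPatera}.
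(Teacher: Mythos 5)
Your proof is correct, but be aware that the paper contains no argument for Lemma~\ref{lem:Del-perm} at all: it is stated as a known fact with a pointer to \cite{ConwaySloane,MoodyPatera}, so your write-up is a self-contained substitute for a citation rather than a variant of an in-text proof. Your reduction is the standard and right one, and all the computations check out: with $(\boldsymbol{\alpha}_i|\boldsymbol{\alpha}_i)=(\tilde{\boldsymbol{\alpha}}|\tilde{\boldsymbol{\alpha}})=2$ one indeed gets $r_i(\boldsymbol{v})-\boldsymbol{v}=-(\boldsymbol{v}|\boldsymbol{\alpha}_i)\boldsymbol{\alpha}_i$ and $r_0(\boldsymbol{v})-\boldsymbol{v}=\bigl((\boldsymbol{v}|\tilde{\boldsymbol{\alpha}})-1\bigr)\tilde{\boldsymbol{\alpha}}$, both in $Q(A_N)$ for $\boldsymbol{v}\in P(A_N)$ by integrality of the pairings, so $W(A_N)$ acts trivially on $P(A_N)/Q(A_N)\cong\ZZ/(N+1)\ZZ$, while the cosets $\boldsymbol{\omega}_k+Q(A_N)$ are pairwise distinct (since $\boldsymbol{\omega}_k\equiv k\,\boldsymbol{\omega}_1 \mod Q(A_N)$); combined with the fact, stated in the paper just before the lemma, that the centers of the class-$k$ tiles are exactly the points of $Q(A_N)+\boldsymbol{\omega}_k$, this closes the argument. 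One phrase should be repaired: ``the tiles \emph{congruent} to $P(k,N)$ are exactly those whose centers lie in $\boldsymbol{\omega}_k+Q(A_N)$'' is literally false, because by the paper's own Remark the ambo-simplices $P(k,N)$ and $P(N-k+1,N)$ are congruent under a point reflection, so congruence class is too coarse an invariant to pin down $k$; what your proof actually uses, and should say, is that the label $k$ is \emph{defined} by the coset of the tile's center, after which the triviality of the action on $P(A_N)/Q(A_N)$ does the work. A small economy: since the translations by $Q(A_N)$ are already products of the reflections $r_0,r_1,\dots,r_N$, checking the $N+1$ generators suffices and the semidirect decomposition need not be invoked. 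As a bonus, your method explains the Remark itself: the point reflection identifying $P(k,N)$ with $P(N-k+1,N)$ acts on $P(A_N)/Q(A_N)$ by $k\mapsto N+1-k$, hence nontrivially, which is precisely why it cannot belong to $W(A_N)$ — this is the extra insight a coset-based proof buys over a bare citation.
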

\begin{Rem}
The ambo-simplices
$P(k,N)$ and $P(N-k+1,N)$ can be identified by applying a point reflection
symmetry,
which is not an element of the affine Weyl group. For our purposes it is 
important to keep the difference.
\end{Rem}
In \cite{MoodyPatera} one can find a detailed description of 
affine Weyl group orbits of facets of various dimension of the 
Delaunay tiles of root lattices. 
\begin{Cor} \label{cor:facets-Weyl}
The following results are either
explicitly stated in \cite{MoodyPatera} or can be without difficulty 
derived using the
method presented there: 
\begin{enumerate}
\item With respect to the $W(A_N)$-action the $K$-dimensional
facets of Delaunay tiles of $Q(A_N)$ are exactly $K$ ambo-simplices 
$P(k,K)$, $1\leq k \leq K$, where the
notation comes from $Q(A_K)$ sublattices. 
\item No two different regular $N$-simplices $P(1,N)$ in $Q(A_N)$
share more then 
a $0$-dimensional facet, i.e. a vertex of the root lattice; 
in any vertex there meet exactly $N+1$ such simplices. 
All the $2$-facets of  $P(1,N)$ are the $P(1,2)$ equilateral
triangles in the number of
$\left( \begin{array}{c} N+1 \\ 3 \end{array} \right)$.
\item The $K$-facet $P(K-1,K)$ has exactly:\\ (i)
$\left( \begin{array}{c} K+1 \\ 2 \end{array} \right)$ vertices;\\
(ii) $(K+1)\left( \begin{array}{c} K \\ 2  \end{array} \right)$ 
$1$-facets (edges);\\
(iii) $\left( \begin{array}{c} K+1 \\ 3  \end{array} \right)$ $2$-facets
$P(1,2)$, of which exactly $K-1$ meet in any vertex of the tile and no two
share an edge, and
$(K+1)\left( \begin{array}{c} K \\ 3  \end{array} \right)$ $2$-facets
$P(2,2)$;\\
(iv) $\left( \begin{array}{c} K+1 \\ 4  \end{array} \right)$ $3$-facets
$P(2,3)$, which are regular octahedra.
\end{enumerate}
\end{Cor}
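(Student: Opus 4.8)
The plan is to reduce everything to the two combinatorial models already introduced before the statement: the hypercube section $\tilde{P}(k,N)=\{x\in[0,1]^{N+1}:x_1+\dots+x_{N+1}=k\}$ and its $1$-skeleton, the Johnson graph $J(N+1,k)$. The starting point I would establish first is the standard description of the faces of a hypersimplex: every proper face of $\tilde{P}(k,N)$ is obtained by choosing disjoint subsets $A,B\subseteq\{1,\dots,N+1\}$ and freezing $x_i=0$ for $i\in A$, $x_i=1$ for $i\in B$, the remaining coordinates staying free; the resulting face is again a hypersimplex, namely a copy of $P(k-|B|,\,N-|A|-|B|)$ of dimension $N-|A|-|B|$, supported on the $Q(A_{N-|A|-|B|})$ sublattice spanned by the free coordinate directions. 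All three parts then follow from this single description together with the symmetries of the lattice.

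For part (1) a $K$-dimensional face corresponds to $|A|+|B|=N-K$, hence four free coordinates' worth $K+1$ of them, and is a copy of $P(k',K)$ with $k'=k-|B|$; as $A,B$ and the ambient tile vary, $k'$ runs exactly over $1,\dots,K$, and each value is realized. It remains to show that two $K$-faces of the same type $k'$ are $W(A_N)$-equivalent while distinct $k'$ give distinct orbits. Here I would use $W(A_N)=Q(A_N)\rtimes W_0(A_N)$ with $W_0(A_N)\cong S_{N+1}$ permuting the coordinates: the symmetric group acts transitively on the choice of the $(K+1)$-element set of free coordinates, the translations move between parallel copies, and Lemma~\ref{lem:Del-perm} supplies transitivity on the surrounding tiles, whereas the slice level $k'$ is an invariant separating the orbits. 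This transitivity is the step I expect to be the main obstacle: it is the only genuinely \emph{global} assertion, the others being local to a single tile, and it is precisely where the orbit machinery of \cite{MoodyPatera} is needed to make the argument fully rigorous.

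Parts (3) and (2) are then bookkeeping with the face description. For (3) I specialize to $k=K-1$ and $N=K$, so that there are $K+1$ coordinates: the vertices are the $(K-1)$-subsets, giving $\binom{K+1}{2}$; counting neighbours in $J(K+1,K-1)$ gives each vertex degree $2(K-1)$ and hence $(K+1)\binom{K}{2}$ edges; the $2$-faces split according to $k'\in\{1,2\}$ into $\binom{K+1}{3}$ triangles $P(1,2)$ (from $|A|=0$, $|B|=K-2$) and $(K+1)\binom{K}{3}$ triangles $P(2,2)$ (from $|A|=1$, $|B|=K-3$); and the octahedra $P(2,3)$ arise from $|A|=0$, $|B|=K-3$, giving $\binom{K+1}{4}$. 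The incidence claims for the $P(1,2)$ triangles I would get from the observation that each such triangle is determined by its complementary $3$-set $F=\{1,\dots,K+1\}\setminus B$, a vertex $S$ lying in the triangle iff the complementary pair $\{1,\dots,K+1\}\setminus S$ is contained in $F$; this yields exactly $K-1$ triangles through each vertex, and shows that two distinct triangles share a vertex iff $|F\cap F'|=2$ and an edge iff $F=F'$, so no two share an edge.

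For part (2) I specialize to $k=1$: the face formula shows the facets are the simplices $P(1,N-1)$ and the only $2$-faces are the $\binom{N+1}{3}$ triangles $P(1,2)$. The two remaining geometric claims I would settle by the explicit corner-simplex computation in the cube model: the simplices $\hat{P}(1,N)+m$, $m\in\ZZ^N$, are the corner simplices sitting at the lattice points $m$, those containing a fixed vertex (say the origin) are exactly the $N+1$ translates with $m=0$ or $m$ equal to minus a standard unit vector, and a direct inspection of coordinates shows any two of these meet only at that vertex. Finally I would note, for honesty about the word \emph{exactly}, that $P(K-1,K)$ also carries tetrahedral $3$-faces $P(3,3)$ for $K\ge4$ (from $|A|=1$, $|B|=K-4$), so that item (iv) lists only the octahedra; the stated count $\binom{K+1}{4}$ for the latter is nonetheless correct.
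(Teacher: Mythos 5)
Your proposal is correct, but it proves more than the paper does: the paper offers no proof of this corollary at all, instead delegating it wholesale to the facet classification of Delaunay cells of root lattices in \cite{MoodyPatera}, which proceeds via Coxeter--Dynkin diagram and orbit machinery. Your route is a self-contained elementary derivation from the standard face description of the hypersimplex $\tilde{P}(k,N)$ (freezing disjoint coordinate sets $A$, $B$ at $0$ and $1$ yields a face of type $P(k-|B|,\,N-|A|-|B|)$), after which parts (2) and (3) reduce to the correct bookkeeping you carry out: the Johnson-graph degree count for the edges, the splitting of $2$-faces by $|B|\in\{K-2,K-3\}$, the complementary-set argument for the $K-1$ triangles per vertex and the no-shared-edge claim, and the corner-simplex computation for part (2). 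What the citation buys the paper is precisely the global orbit statement in part (1), which you rightly identify as the only nonlocal step; but note you can close it without \cite{MoodyPatera}: normalize a $K$-face so that every free coordinate genuinely takes two values on its vertex set, define $k'$ intrinsically as the number of free coordinates sitting at their ceiling, observe that coordinate permutations and zero-sum translations preserve this level (whereas the point reflection exchanging $P(k',K)$ with $P(K-k'+1,K)$ is not in $W(A_N)$, exactly as the Remark following Lemma~\ref{lem:Del-perm} warns), and then match two same-level faces by first permuting the free sets into coincidence and then translating by the zero-sum difference of the remaining data. Two small points: the phrase ``four free coordinates' worth $K+1$ of them'' is garbled, though the intended content ($K+1$ free coordinates) is right; and your honesty remark about item (iv) is well taken --- $P(K-1,K)$ also has $(K+1)\binom{K}{4}$ tetrahedral $3$-faces $P(3,3)$ for $K\geq 4$, so the stated $\binom{K+1}{4}$ counts only the octahedra, which is all the paper needs for the Veblen-configuration count in Section~\ref{sec:P(K-1,K)}.
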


\section{The root lattice description of the Desargues maps}
\label{sec:H-M-aWg}
In this Section we study geometrically the Desargues maps and their affine Weyl
group symmetry. Its appearance is natural once their definition is restated in
terms of the root lattice. We also discuss the resulting symmetry of the
generalized Desargues configurations which are responsible for multidimensional
compatibility of the maps. We postpone for the next Section the study of the
corresponding properties of the non-commutative Hirota--Miwa system. 

\subsection{The Desargues maps of the root lattice}
Consider the $\ZZ^N$ coordinates in the root lattice $Q(A_N)$ 
by the following identification 
\begin{equation} \label{eq:Z-Q}
\ZZ^N = \sum_{i=1}^N
\ZZ\boldsymbol{\varepsilon}^{N+1}_i= Q(A_N).
\end{equation}
Then the Desargues map condition~\cite{Dol-Des}, recalled at the beginning of the
Introduction, can be 
formulated as collinearity of images of points labelled by $n$,
$n+\boldsymbol{\varepsilon}^{N+1}_i$, $i=1,\dots, N$. 
Because those are the vertices of a basic $N$-simplex
$P(1,N)$ of the root lattice, then one arises to more 
geometric characterization of Desargues maps, which can eventually
be taken as their definition.
\begin{Prop}
Under the above identification \eqref{eq:Z-Q} of $Q(A_N)$ with $\ZZ^N$ the
Desargues maps are the maps $\phi:Q(A_N)\to\PP^M$ such that the vertices of each
basic $N$-simplex $P(1,N)$ are mapped into collinear points. 
\end{Prop}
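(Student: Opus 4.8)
The plan is to read the Proposition as an equivalence between the original analytic definition of a Desargues map (collinearity of every triple $\phi(n),\phi(n+\boldsymbol{\varepsilon}_i),\phi(n+\boldsymbol{\varepsilon}_j)$ with $i\neq j$) and the proposed geometric one (collinearity of the whole image of each basic simplex $P(1,N)$), and to prove the two implications separately. Under the identification \eqref{eq:Z-Q} the canonical vector $\boldsymbol{\varepsilon}_i$ becomes $\boldsymbol{\varepsilon}^{N+1}_i$, so for a fixed base point $n$ the triples occurring in the original definition are exactly the triples of vertices of the simplex $S_n=\{n,\,n+\boldsymbol{\varepsilon}^{N+1}_1,\dots,n+\boldsymbol{\varepsilon}^{N+1}_N\}$ that contain the apex $n$. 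The implication from the geometric condition to the original one is then immediate: if the whole image $\phi(S_n)$ lies on a line, every triple selected from it, in particular each $\{\phi(n),\phi(n+\boldsymbol{\varepsilon}_i),\phi(n+\boldsymbol{\varepsilon}_j)\}$, is collinear.

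For the converse I would first isolate the elementary projective fact that, for a fixed $n$, collinearity of all triples sharing the common point $\phi(n)$ already forces the $N+1$ points $\phi(n),\phi(n+\boldsymbol{\varepsilon}^{N+1}_1),\dots,\phi(n+\boldsymbol{\varepsilon}^{N+1}_N)$ onto a single line. Indeed, if all these images coincide there is nothing to prove; otherwise one chooses an index $i_0$ with $\phi(n+\boldsymbol{\varepsilon}^{N+1}_{i_0})\neq\phi(n)$ and lets $\ell$ be the unique line of $\PP^M$ through these two distinct points. For every other $j$ the triple $\phi(n),\phi(n+\boldsymbol{\varepsilon}^{N+1}_{i_0}),\phi(n+\boldsymbol{\varepsilon}^{N+1}_j)$ is collinear by hypothesis, and since two distinct points determine $\ell$ uniquely, $\phi(n+\boldsymbol{\varepsilon}^{N+1}_j)\in\ell$; hence all of $\phi(S_n)$ lies on $\ell$. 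I would stress that this step needs no genericity assumption, only the axiom that a unique line passes through two distinct points.

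What remains, and is the only genuinely geometric input, is to verify that the simplices $S_n$, $n\in Q(A_N)$, are \emph{precisely} all the basic $N$-simplices $P(1,N)$ of the root lattice, so that the original condition (imposed at every $n\in\ZZ^N=Q(A_N)$) really exhausts the whole tiling. Here I would invoke the Delaunay description recalled in Section~\ref{sec:simplex-lattice}: each $S_n$ is a lattice translate of the fundamental simplex $S_0=\{0,\boldsymbol{\varepsilon}^{N+1}_1,\dots,\boldsymbol{\varepsilon}^{N+1}_N\}$, all the $S_n$ share one orientation, and there is exactly one regular Delaunay simplex of that orientation per fundamental cell; since translations preserve each congruence class of tiles (Lemma~\ref{lem:Del-perm}), the $S_n$ form exactly the class designated $P(1,N)$. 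Combined with the observation that each $S_n$ has a well-defined apex $n$, namely the unique vertex all of whose edges are of the form $\boldsymbol{e}_{N+1}-\boldsymbol{e}_\bullet$, this yields a bijection between base points and basic simplices, and the equivalence of the two definitions follows by applying the previous paragraph at every $n$. The main (and essentially the only) obstacle is this bookkeeping identification of the $P(1,N)$ class with the translates $S_n$; once the Delaunay structure of Section~\ref{sec:simplex-lattice} is in place it is routine.
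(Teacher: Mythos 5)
Your proof is correct and takes essentially the same route as the paper, which offers no separate argument but states the Proposition as an immediate reformulation: under the identification \eqref{eq:Z-Q} the points $n,\, n+\boldsymbol{\varepsilon}^{N+1}_1,\dots,\, n+\boldsymbol{\varepsilon}^{N+1}_N$ of the original Desargues condition are precisely the vertices of a basic $N$-simplex $P(1,N)$, and these simplices exhaust that Delaunay class. You simply make explicit the two facts the paper leaves implicit --- that collinearity of all triples through the common image $\phi(n)$ forces all $N+1$ image points onto a single line (handled correctly, including the degenerate case), and that the translates $S_n$ are exactly the tiles of class $P(1,N)$ --- both of which you verify soundly.
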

\begin{Rem}
The image of a fundamental parallelotope of the root lattice under the Desargues
map was called in \cite{Dol-Des} a Desargues cube. 
\end{Rem}
The above proposition/definition from the very beginning exhibits the 
affine Weyl group symmetry of Desargues maps, what can be immediately inferred
from Lemma~\ref{lem:Del-perm}.

\begin{Th} \label{th:AW-sym}
If $\phi:Q(A_N)\to\PP^M$ is a Desargues map then also for any element $w$ 
of the affine Weyl group $W(A_N)$ the map $\phi \circ w$ is a Desargues map,
where we consider the natural action of $W(A_N)$ on the root lattice
$Q(A_N)$. 
\end{Th}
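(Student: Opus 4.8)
The plan is to deduce the statement directly from the geometric characterization of Desargues maps given in the preceding Proposition, together with Lemma~\ref{lem:Del-perm}. By that characterization a map $\psi\colon Q(A_N)\to\PP^M$ is a Desargues map precisely when, for \emph{every} basic regular $N$-simplex $P(1,N)$ of the lattice, the $\psi$-images of its $N+1$ vertices are collinear. Hence to prove that $\phi\circ w$ is a Desargues map it suffices to verify this collinearity condition, with $\psi=\phi\circ w$, on an arbitrary basic $N$-simplex.

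First I would fix such a basic simplex $S$ and use that $w$, being an element of $W(A_N)=Q(A_N)\rtimes W_0(A_N)$, acts on $\VV$ as an affine isometry preserving the root lattice $Q(A_N)$; consequently it maps lattice vertices to lattice vertices and carries the Delaunay tiling of $\VV$ to itself. The crucial input is Lemma~\ref{lem:Del-perm}: since $w$ permutes the Delaunay tiles within each class $P(k,N)$, the image $w(S)$ is again a tile of the class $P(1,N)$, i.e.\ another basic regular $N$-simplex, and $w$ restricts to a bijection from the vertex set of $S$ onto the vertex set of $w(S)$.

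Next, because $\phi$ is by assumption a Desargues map, the $\phi$-images of the vertices of the basic simplex $w(S)$ are collinear. But these are exactly the points $\phi(w(v))=(\phi\circ w)(v)$ as $v$ runs over the vertices of $S$, so the $(\phi\circ w)$-images of the vertices of $S$ are collinear as well. Since $S$ was an arbitrary basic $N$-simplex, the characterization shows that $\phi\circ w$ is a Desargues map, which is the claim.

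I expect no substantial obstacle here: the entire content is the interplay between the manifestly $W(A_N)$-invariant formulation of the Desargues condition (collinearity of images of the vertices of each basic simplex) and the fact, recorded in Lemma~\ref{lem:Del-perm}, that the affine Weyl group permutes precisely these simplices. Should a more computational route be preferred, it would suffice to check the invariance on the Coxeter generators $r_0,\dots,r_N$, using \eqref{eq:AN-action-edges} to see explicitly that each $r_i$ sends the vertices of a basic simplex to the vertices of a basic simplex; the general case then follows at once, since a composition of maps each preserving the Desargues property again preserves it.
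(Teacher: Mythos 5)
Your proposal is correct and follows essentially the same route as the paper, which derives the theorem immediately from the manifestly $W(A_N)$-invariant reformulation of the Desargues condition together with Lemma~\ref{lem:Del-perm} (the affine Weyl group permutes the Delaunay tiles within each class, in particular the basic simplices $P(1,N)$). Your write-up merely spells out in detail the one-line argument the paper leaves implicit, with the same key ingredients.
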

\begin{Cor}
Because the theory of $K$-dimensional
quadrilateral lattices and their Laplace transformations is 
equivalent \cite{Dol-Des}  to the
theory of $(2K-1)$-dimensional Desargues maps then the 
$W(A_{2K-1})$ affine Weyl group symmetry applies, with appropriate
modifications, to quadrilateral lattices as well.
\end{Cor}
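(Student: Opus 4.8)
The plan is to prove the corollary by transport of structure: the equivalence of \cite{Dol-Des} is used to carry the symmetry of Theorem~\ref{th:AW-sym} across the dictionary between the two theories. First I would set $N=2K-1$ and recall the precise content of that equivalence, namely that a $K$-dimensional quadrilateral lattice together with its complete family of Laplace transformations carries exactly the same data as a single Desargues map $\phi:Q(A_{2K-1})\to\PP^M$. Concretely, I would fix the partition of the distinguished directions of $A_{2K-1}$ into those along which the quadrilateral lattice is built and the complementary ones that index its Laplace transforms, and then record the resulting bijection $\Phi$ from lattices-with-transforms to Desargues maps. The existence and invertibility of $\Phi$ is exactly what \cite{Dol-Des} supplies.

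With $\Phi$ at hand the symmetry transfers formally. By Theorem~\ref{th:AW-sym}, for every $w\in W(A_{2K-1})$ the composition $\phi\circ w$ is again a Desargues map whenever $\phi$ is; pulling this back through $\Phi^{-1}$ produces a well-defined action of $W(A_{2K-1})$ on the class of $K$-dimensional quadrilateral lattices equipped with their Laplace transformations, under which that class is preserved. This is the asserted $W(A_{2K-1})$ symmetry. Note that the object on which the group genuinely acts is the enlarged datum---the lattice together with its transforms---rather than a lattice in isolation, which is what the qualifier ``with appropriate modifications'' is meant to record.

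To make the statement concrete I would then read off the meaning of the generators under $\Phi$. The finite part $W_0(A_{2K-1})\cong S_{2K}$ permutes the $2K$ vectors $\boldsymbol{e}_i$; permutations that preserve each block of the partition act as relabelings of the quadrilateral lattice or as reorderings of its Laplace sequence, the latter consistent with the $A_{K-1}$ parametrization of Laplace transformations in \cite{DMMMS,TQL}. The translation part $Q(A_{2K-1})$ of the semidirect product then corresponds to composing ordinary lattice shifts with steps along the Laplace sequence.

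The main obstacle is the bookkeeping of the mixed generators, that is, the transpositions in $S_{2K}$ that interchange a lattice direction with a transform direction, together with the affine reflection $r_0$. These do not fix any single quadrilateral lattice, and the point requiring genuine care is to verify that under $\Phi$ each such generator still corresponds to a legitimate operation on the enlarged datum, i.e. a relabeling combined with a Laplace transformation. Once $\Phi$ is shown to be equivariant for all the generators $r_0,\dots,r_{2K-1}$, the corollary is immediate from Theorem~\ref{th:AW-sym}.
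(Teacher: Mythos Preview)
Your approach is correct and is exactly the reasoning the paper intends: the corollary is stated without a separate proof, the justification being entirely contained in its ``Because\ldots then\ldots'' clause, i.e.\ transport of the $W(A_{2K-1})$ action of Theorem~\ref{th:AW-sym} through the equivalence of \cite{Dol-Des}. Your elaboration of the bookkeeping (the partition of directions, the role of mixed generators, the need to regard the lattice together with its Laplace transforms as the object acted upon) goes well beyond what the paper supplies, but it unpacks precisely the content hidden in the phrase ``with appropriate modifications.''
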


\subsection{Generalized
Desargues configurations} \label{sec:P(K-1,K)}

Consider the image of a $K$ dimensional facet of type $P(K-1,K)$ under
generic Desargues map. The analysis will be made on the basis of results
presented in Corollary~\ref{cor:facets-Weyl}.
First notice that for $K=3$ we obtain the Veblen 
configuration~\cite{BeukenhoutCameron-H}, which is responsible for the
three-dimensional compatibility of the Desargues maps~\cite{Dol-Des}: six 
vertices of the octahedron $P(2,3)$ are mapped into six
coplanar points, while its four facets $P(1,2)$ (the other four are $P(2,2)$)
are mapped into four lines.
\begin{Rem}
In \cite{Schief-talk} the maps of the three-dimensional lattice of face
centered cubic combinatorics, i.e. the $Q(A_3)$ root 
lattice~\cite{Schief-private}, into $\RR^3$ with the
property that vertices of the (bipartite) octahedra are mapped into points of
the Veblen configuration (named there the Menelaus configuration, because in the
affine context it is related to the Menelaus theorem)
have been called the Laplace--Darboux lattices. Moreover, it was shown there
that such maps provide a symmetric description of the Laplace sequences of 
two-dimensional quadrilateral lattices~\cite{Sauer2,DCN}. 
\end{Rem}
By point (3) of
Corollary~\ref{cor:facets-Weyl} the Desargues map
images of $\left( \begin{array}{c} K+1 \\ 2 \end{array} \right)$
vertices of $P(K-1,K)$ and of its 
$\left( \begin{array}{c} K+1 \\ 3 \end{array} \right)$
$2$-facets $P(1,2)$ give a configurations of points and lines, respectively,
which satisfies the following conditions:
\begin{enumerate}
\item every line is incident with exactly three points,
\item every point is incident with exactly $K-1$ lines,
\item it contains exactly $\left( \begin{array}{c} K+1 \\ 4 \end{array} \right)$
Veblen configurations.
\end{enumerate} 
In the case of
$K=4$ we obtain the Desargues configuration, which is responsible for the
four-dimensional compatibility of the Desargues maps~\cite{Dol-Des}.
For $K>4$ we obtain generalizations of the Desargues configuration,
called binomial
configurations and studied in \cite{Herrmann,Levi,Prazmowska2}.

Let us discuss the symmetry of the generalized Desargues
configurations. By properties of the Johnson graphs of the ambo-simplices
$P(K-1,K)$ the points of the configuration are labelled by $(K-1)$-point
subsets of $\{ 1,2, \dots , K+1\}$, while the lines of the configuration are
labelled by the $(K-2)$-point subsets. A point is incident with a line if the
line labels are contained in the point labels. The symmetry group of the $K$-th
generalized Desargues configuration is thus the symmetric group of $(K+1)$
elements $S_{K+1}=W_0(A_K)$; we remark that we do not take into account the 
point-line duality in the special case $K=4$ of the original
Desargues configuration \emph{on the plane}. 
\begin{Rem}
There is a simple geometric description  of the origin of the symmetry of
generalized Desargues configurations, attributed by Coxeter \cite{Coxeter-s-d} 
to Cayley~\cite{Cayley} in the basic case $K=4$. Combinatorially, instead of
$(K-1)$-point subsets of $\{ 1,2, \dots , K+1\}$ to label the points of the
configuration one uses the complementary $2$-point subsets (and $3$-point
subsets to label the lines).

Given $K+1$ points in general
position in the projective space $\PP^K$, consider lines joining pairs of 
the points, and planes through the triplets. Intersected by a generic 
hyperplane in $\PP^K$ the lines and planes give $K$-th generalized Desargues 
configuration of 
points and lines on the hyperplane. The symmetry group permutes the $K+1$
points.
\end{Rem}

\section{The affine Weyl group symmetry of the Hirota--Miwa system}
\label{sec:aW-HM}
Below we transfer geometric considerations of the previous Section
into the language of the Hirota--Miwa system and of its symmetries. 
We remark, that similarly one can study the affine Weyl group symmetry of
the non-commutative
discrete modified KP system and the generalized lattice spin system 
\cite{FWN-Capel} (called also the non-commutative Schwarzian discrete KP 
system \cite{KoSchiefSDS-II}), because they are gauge equivalent~\cite{Dol-Des} to the
Hirota--Miwa system.

\subsection{$\ZZ^N$ sectors of a Desargues map}

Fix $\ZZ^N$-coordinates in the $A_N$ root lattice by the identification 
\eqref{eq:Z-Q}, and consider the linear problem of
the Desargues map \eqref{eq:lin-dKP}
in the Hirota--Miwa gauge adjusted to this choice of basis 
\begin{equation} \label{eq:lin-dKP-N+1}
\boldsymbol{\phi}^{N+1}(n+\boldsymbol{\varepsilon}^{N+1}_i) - 
\boldsymbol{\phi}^{N+1}(n+\boldsymbol{\varepsilon}^{N+1}_j) =  
\boldsymbol{\phi}^{N+1}(n) 
U_{ij}^{N+1}(n),  \qquad 1\leq i \ne j \leq N,
\end{equation}
with the corresponding potentials $\rho^{N+1}_i$, $1=1,\dots,N$, such that
\begin{equation*}
U_{ij}^{N+1}(n) = \left[ \rho^{N+1}_i(n)\right]^{-1}
 \rho^{N+1}_i(n+\boldsymbol{\varepsilon}^{N+1}_j).
\end{equation*}
In choosing the above coordinates we used one of the
$N+1$ tiles $P(1,N)$ meeting in
the point $n\in Q(A_N)$. Another choice of a "sector", i.e. 
a basis $\{ \boldsymbol{\varepsilon}^i_j \}$, where the index $i$ is fixed, and
$j\neq i$, along edges of another 
such a tile, should give a similar linear problem.
The following result shows the gauge and the potentials adjusted to
such an equivalent choice. 
\begin{Lem} \label{lem:lin-i}
The functions $\boldsymbol{\phi}^{i}: Q(A_N) \to \DD^{M+1}_*$
given by 
\begin{equation} \label{eq:phi-i}
\boldsymbol{\phi}^{i}(n) = (-1)^{(n|\boldsymbol{\varepsilon}^{N+1}_i)}
\boldsymbol{\phi}^{N+1}(n)
\left[ \rho^{N+1}_i(n)\right]^{-1},
\end{equation}
satisfy the linear system in the $i$-th sector
\begin{equation} \label{eq:lin-dKP-i}
\boldsymbol{\phi}^{i}(n+\boldsymbol{\varepsilon}^{i}_j) - 
\boldsymbol{\phi}^{i}(n+\boldsymbol{\varepsilon}^{i}_k) =  
\boldsymbol{\phi}^{i}(n) 
U_{jk}^{i}(n),  \qquad i, j , k \quad \text{distinct,}
\end{equation}
where
\begin{equation}
U_{jk}^{i}(n) = \left[ \rho^{i}_j(n)\right]^{-1}
 \rho^{i}_j(n+\boldsymbol{\varepsilon}^{i}_k),
\end{equation}
and the potentials $\rho^{i}_j$ are given by
\begin{equation} \label{eq:rho-ij}
\rho^{i}_j(n) = \begin{cases}
\rho^{N+1}_j(n) \left[ \rho^{N+1}_i(n)\right]^{-1} , & \qquad j \neq N+1,\\
\left[ \rho^{N+1}_i(n)\right]^{-1} , & \qquad j = N+1.
\end{cases}
\end{equation}
\end{Lem}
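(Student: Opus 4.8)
The plan is to verify the linear system \eqref{eq:lin-dKP-i} by direct substitution of the definition \eqref{eq:phi-i}, reducing everything to the data of the $(N+1)$-th sector. The geometric key is that the basic simplex spanned at $n$ by the $i$-th sector edges $\boldsymbol{\varepsilon}^i_a$, $a\neq i$, is the \emph{same} tile $P(1,N)$ as the one spanned at the shifted vertex $m:=n-\boldsymbol{\varepsilon}^{N+1}_i$ by the $(N+1)$-th sector edges, because $\boldsymbol{\varepsilon}^i_a=\boldsymbol{\varepsilon}^{N+1}_a-\boldsymbol{\varepsilon}^{N+1}_i$ for $a\neq N+1$ and $\boldsymbol{\varepsilon}^i_{N+1}=-\boldsymbol{\varepsilon}^{N+1}_i$. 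Hence $n+\boldsymbol{\varepsilon}^i_a=m+\boldsymbol{\varepsilon}^{N+1}_a$ for $a\neq N+1$, while $n+\boldsymbol{\varepsilon}^i_{N+1}=m$ and $n=m+\boldsymbol{\varepsilon}^{N+1}_i$. Writing $\boldsymbol{\psi}:=\boldsymbol{\phi}^{N+1}$, $\rho_a:=\rho^{N+1}_a(m)$, $\boldsymbol{\psi}^{(a)}:=\boldsymbol{\phi}^{N+1}(m+\boldsymbol{\varepsilon}^{N+1}_a)$ and $U_{ab}:=U^{N+1}_{ab}(m)$, I would rephrase the whole identity in terms of the quantities based at $m$.

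First I would settle the sign bookkeeping. Since the factor $(-1)^{(n|\boldsymbol{\varepsilon}^{N+1}_i)}$ takes values $\pm1$, which are central in $\DD$ and may therefore be moved freely through every product, and since $(\boldsymbol{\varepsilon}^{N+1}_a|\boldsymbol{\varepsilon}^{N+1}_i)=2$ for $a=i$ and $=1$ for $a\neq i$ (with $a,i\le N$), one obtains for $s_i(\cdot):=(-1)^{(\cdot|\boldsymbol{\varepsilon}^{N+1}_i)}$ that $s_i(m+\boldsymbol{\varepsilon}^{N+1}_i)=s_i(m)$ and $s_i(m+\boldsymbol{\varepsilon}^{N+1}_a)=-s_i(m)$ for $a\neq i$. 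Consequently one common factor $s_i(m)$ appears in all three terms of \eqref{eq:lin-dKP-i} and cancels.

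The main computation is the generic case $j,k\neq N+1$. After the substitution above, and using \eqref{eq:def-rho} in the $(N+1)$-th sector in the form $\rho^{N+1}_a(m+\boldsymbol{\varepsilon}^{N+1}_b)=\rho_a U_{ab}$ together with the explicit $\rho^i_j$ from \eqref{eq:rho-ij}, the claim \eqref{eq:lin-dKP-i} collapses, after cancelling the common right factor $\rho_i^{-1}$, to
\[
\boldsymbol{\psi}^{(k)}U_{ik}^{-1}-\boldsymbol{\psi}^{(j)}U_{ij}^{-1}=\boldsymbol{\psi}^{(i)}U_{ji}^{-1}U_{jk}U_{ik}^{-1}.
\]
Inserting the linear problem \eqref{eq:lin-dKP-N+1} in the form $\boldsymbol{\psi}^{(a)}=\boldsymbol{\psi}^{(i)}+\boldsymbol{\psi}\,U_{ai}$ and invoking the antisymmetry $U_{ab}+U_{ba}=0$ of \eqref{eq:alg-comp-U}, the coefficient of $\boldsymbol{\psi}$ becomes $U_{ki}U_{ik}^{-1}-U_{ji}U_{ij}^{-1}=(-1)-(-1)=0$, so the vector part drops out and only the scalar identity $U_{ik}^{-1}-U_{ij}^{-1}=U_{ji}^{-1}U_{jk}U_{ik}^{-1}$ survives. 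This is exactly the three-term relation of \eqref{eq:alg-comp-U}: writing $U_{jk}=U_{ik}-U_{ij}$ and $U_{ji}^{-1}=-U_{ij}^{-1}$ gives $U_{ji}^{-1}U_{jk}U_{ik}^{-1}=-U_{ij}^{-1}(U_{ik}-U_{ij})U_{ik}^{-1}=U_{ik}^{-1}-U_{ij}^{-1}$, as required.

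It remains to treat the two special cases in which one of $j,k$ equals $N+1$, where the piecewise definition \eqref{eq:rho-ij} and the shift $\boldsymbol{\varepsilon}^i_{N+1}=-\boldsymbol{\varepsilon}^{N+1}_i$ enter. Taking $k=N+1$ (the case $j=N+1$ being analogous, or following from the antisymmetry of \eqref{eq:lin-dKP-i} under $j\leftrightarrow k$), the point $n+\boldsymbol{\varepsilon}^i_{N+1}$ is simply $m$, and the same reduction now leads, after cancelling $\rho_i$ and substituting $\boldsymbol{\psi}^{(j)}=\boldsymbol{\psi}^{(i)}+\boldsymbol{\psi}\,U_{ji}$, to $-\boldsymbol{\psi}^{(j)}U_{ij}^{-1}-\boldsymbol{\psi}=\boldsymbol{\psi}^{(i)}U_{ji}^{-1}$, which closes using only the antisymmetry $U_{ij}+U_{ji}=0$ (the three-term relation is not needed here). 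I expect the genuine obstacle to be purely organizational: keeping the order of the non-commutative factors correct throughout the substitutions, and handling the sign factors and the $N+1$ case distinctions consistently, rather than any conceptual difficulty. The decisive simplification is the observation that, once the linear problem is used to eliminate $\boldsymbol{\psi}^{(k)}$ and $\boldsymbol{\psi}^{(j)}$, the entire vector contribution cancels by antisymmetry and the statement reduces to the scalar Hirota--Miwa relations \eqref{eq:alg-comp-U}.
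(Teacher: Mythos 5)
Your proof is correct and follows essentially the same route as the paper's: direct substitution of \eqref{eq:phi-i} and \eqref{eq:rho-ij} into \eqref{eq:lin-dKP-i}, using the basis relations $\boldsymbol{\varepsilon}^i_a=\boldsymbol{\varepsilon}^{N+1}_a-\boldsymbol{\varepsilon}^{N+1}_i$, $\boldsymbol{\varepsilon}^i_{N+1}=-\boldsymbol{\varepsilon}^{N+1}_i$, two equations of the linear problem \eqref{eq:lin-dKP-N+1}, the antisymmetry $U_{ij}+U_{ji}=0$, and the three-term relation of \eqref{eq:alg-comp-U}. The differences are purely organizational --- you base the computation at $m=n-\boldsymbol{\varepsilon}^{N+1}_i$ and settle the generic case $j,k\neq N+1$ first, whereas the paper starts from $j=N+1$ and obtains $k=N+1$ via $U^i_{N+1,j}=-U^i_{j,N+1}$ --- and your explicit sign bookkeeping and reduction to the scalar identity $U_{ik}^{-1}-U_{ij}^{-1}=U_{ji}^{-1}U_{jk}U_{ik}^{-1}$ correctly fill in details the paper's sketch leaves implicit.
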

\begin{proof} 
To obtain \eqref{eq:lin-dKP-i} with $j=N+1$ it is enough to apply definitions
\eqref{eq:phi-i} and \eqref{eq:rho-ij} to equation \eqref{eq:lin-dKP-N+1}.
Then in order to check the case $k=N+1$ one has to demonstrate that 
$U^i_{N+1,j}(n) = - U^i_{j,N+1}(n)$, which follows from the basic relation
$U^{N+1}_{ij}(n) = -U^{N+1}_{ji}(n)$ expressed in terms of the potentials 
$\rho^{N+1}_i(n)$ and $\rho^{N+1}_j(n)$. One uses also the following
simple rules to express vectors of the new basis in terms of the initial
vectors
\begin{equation*}
\boldsymbol{\varepsilon}^i_j
= \boldsymbol{\varepsilon}^{N+1}_j - \boldsymbol{\varepsilon}^{N+1}_i,
\qquad \text{where by definition}
\quad \boldsymbol{\varepsilon}^k_k = \boldsymbol{0}.
\end{equation*} 
Finally, to prove equation \eqref{eq:lin-dKP-i} with
$j,k\neq N+1$ one needs to use two equations of the linear problem
\eqref{eq:lin-dKP-N+1} for the pairs $i,j$ and $i,k$, and the equation
$U^{N+1}_{ij}(n) + U^{N+1}_{ki}(n)= - U^{N+1}_{jk}(n)$ of the 
system \eqref{eq:alg-comp-U}.
\end{proof}
\begin{Cor}
Formulas \eqref{eq:phi-i} and \eqref{eq:rho-ij} are self-consistent,
i.e. in the place of the index $N+1$ one can take an arbitrary index, i.e.
\begin{equation*}
\boldsymbol{\phi}^{i}(n) = (-1)^{(n|\boldsymbol{\varepsilon}^{j}_i)}
\boldsymbol{\phi}^{j}(n)
\left[ \rho^{j}_i(n)\right]^{-1}.
\end{equation*}
In particular one can check that
\begin{equation} \label{eq:ccc}
\rho^i_j(n) \rho^k_i(n) = \rho^k_j(n), \qquad \text{where by definition}
\quad \rho^i_i = 1.
\end{equation}
\end{Cor}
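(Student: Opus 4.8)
The plan is to push everything back to the reference index $N+1$ and exploit the fact that definition \eqref{eq:rho-ij} can be written as a single formula. First I would adopt the convention $\rho^{N+1}_{N+1}(n)=1$, under which both branches of \eqref{eq:rho-ij} collapse into
\begin{equation*}
\rho^i_j(n) = \rho^{N+1}_j(n)\left[\rho^{N+1}_i(n)\right]^{-1},
\qquad 1\leq i \leq N+1,
\end{equation*}
valid for every pair of indices. Setting $j=i$ here already gives $\rho^i_i(n)=1$, matching the stated convention, so this uniform presentation is the natural starting point for both assertions.

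With this normalization the cocycle identity \eqref{eq:ccc} follows by a one-line telescoping, the only subtlety being the order of multiplication in the (possibly non-commutative) ring $\DD_*$. I would compute
\begin{equation*}
\rho^i_j(n)\,\rho^k_i(n)
= \rho^{N+1}_j(n)\left[\rho^{N+1}_i(n)\right]^{-1}
\rho^{N+1}_i(n)\left[\rho^{N+1}_k(n)\right]^{-1}
= \rho^{N+1}_j(n)\left[\rho^{N+1}_k(n)\right]^{-1}
= \rho^k_j(n),
\end{equation*}
where the inner factor $\left[\rho^{N+1}_i(n)\right]^{-1}\rho^{N+1}_i(n)$ cancels \emph{in place}, so no commutativity is ever invoked.

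For the self-consistency formula I would invert \eqref{eq:phi-i} written for the index $j$: right-multiplying $\boldsymbol{\phi}^{j}(n)=(-1)^{(n|\boldsymbol{\varepsilon}^{N+1}_j)}\boldsymbol{\phi}^{N+1}(n)\left[\rho^{N+1}_j(n)\right]^{-1}$ by $\rho^{N+1}_j(n)$ yields $\boldsymbol{\phi}^{N+1}(n)=(-1)^{(n|\boldsymbol{\varepsilon}^{N+1}_j)}\boldsymbol{\phi}^{j}(n)\rho^{N+1}_j(n)$, the sign being unaffected since $(-1)^{-x}=(-1)^{x}$. Substituting this into \eqref{eq:phi-i} for the index $i$ and collecting the scalar acting on the right of $\boldsymbol{\phi}^{j}(n)$ gives $\rho^{N+1}_j(n)\left[\rho^{N+1}_i(n)\right]^{-1}$, which by the uniform formula above equals $\left[\rho^j_i(n)\right]^{-1}$; this is exactly the coefficient demanded by the claim.

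The one genuinely delicate point, and the only place where an error could slip in, is the sign bookkeeping. The substitution produces the exponent $(n|\boldsymbol{\varepsilon}^{N+1}_i)+(n|\boldsymbol{\varepsilon}^{N+1}_j)$, whereas the target carries $(n|\boldsymbol{\varepsilon}^{j}_i)$. Using $\boldsymbol{\varepsilon}^{N+1}_i=\boldsymbol{e}_{N+1}-\boldsymbol{e}_i$ and $\boldsymbol{\varepsilon}^{j}_i=\boldsymbol{e}_j-\boldsymbol{e}_i$, I would verify that the difference of the two exponents is
\begin{equation*}
(n|\boldsymbol{\varepsilon}^{N+1}_i)+(n|\boldsymbol{\varepsilon}^{N+1}_j)-(n|\boldsymbol{\varepsilon}^{j}_i)
= 2\,(n|\boldsymbol{\varepsilon}^{N+1}_j),
\end{equation*}
which is an even integer because $n\in Q(A_N)$ has integer coordinates. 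Hence $(-1)$ raised to the two exponents agrees and the formula holds. Beyond this mod-$2$ accounting and the attention to multiplicative order forced by $\DD$, I expect no real obstacle: the whole statement is linear algebra over the division ring together with the defining relations \eqref{eq:phi-i}--\eqref{eq:rho-ij}.
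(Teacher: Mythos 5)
Your verification is correct: the uniform rewriting of \eqref{eq:rho-ij} with the convention $\rho^{N+1}_{N+1}=1$, the in-place cancellation giving \eqref{eq:ccc} without commutativity, and the mod-$2$ sign accounting (difference of exponents equal to $2(n|\boldsymbol{\varepsilon}^{N+1}_j)$) all check out against the definitions \eqref{eq:phi-i}--\eqref{eq:rho-ij}. The paper leaves this Corollary as a direct computation (``one can check''), and your argument is exactly that intended computation, carried out carefully.
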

From Lemma \ref{lem:lin-i} we get the following conclusions:
\begin{Cor}
The functions $U^\ell_{ij}$ satisfy the system 
\eqref{eq:alg-comp-U}-\eqref{eq:U-rho}
\begin{align} \label{eq:alg-comp-U-l}
& U^\ell_{ij}(n) + U^\ell_{ji}(n) = 0, \qquad  
U^\ell_{ij}(n) + U^\ell_{jk}(n) + U^\ell_{ki}(n) = 0,\\
& \label{eq:U-rho-l} 
U^\ell_{kj}(n)U^\ell_{ki}(n+\boldsymbol{\varepsilon}^\ell_j) = 
U^\ell_{ki}(n) U^\ell_{kj}(n+\boldsymbol{\varepsilon}^\ell_i),
\end{align}
while the potentials $\rho^i_j$ satisfy corresponding equations
\begin{align} \label{eq:rho-1}
& \rho^{i}_\ell(n) \rho^{\ell}_i(n+\boldsymbol{\varepsilon}^{\ell}_j) +
\rho^{j}_\ell(n) \rho^{\ell}_j(n+\boldsymbol{\varepsilon}^{\ell}_i) =0,\\
\label{eq:rho-2}
& \rho^{i}_\ell(n) \rho^{\ell}_i(n+\boldsymbol{\varepsilon}^{\ell}_j) +
\rho^{j}_\ell(n) \rho^{\ell}_j(n+\boldsymbol{\varepsilon}^{\ell}_k) +
\rho^{k}_\ell(n) \rho^{\ell}_k(n+\boldsymbol{\varepsilon}^{\ell}_i) =0.
\end{align}
\end{Cor}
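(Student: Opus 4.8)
The plan is to read off everything from Lemma~\ref{lem:lin-i}, exploiting that for each fixed $\ell$ the $\ell$-sector linear problem \eqref{eq:lin-dKP-i} is formally identical with the initial linear problem \eqref{eq:lin-dKP}. First I would note that the $N$ vectors $\boldsymbol{\varepsilon}^\ell_j$, $j\neq\ell$, are the edges, emanating from $n$, of a basic regular $N$-simplex $P(1,N)$, so that \eqref{eq:lin-dKP-i} is precisely the system \eqref{eq:lin-dKP} with the distinguished index $N+1$ replaced by $\ell$. Since $\boldsymbol{\phi}^\ell$ is a genuine (nonvanishing, by \eqref{eq:phi-i}) solution of this system, its coefficients $U^\ell_{jk}$ must obey the very compatibility relations that \eqref{eq:alg-comp-U}-\eqref{eq:U-rho} impose on the coefficients of \eqref{eq:lin-dKP}; written in the $\ell$-sector indices these are exactly \eqref{eq:alg-comp-U-l}-\eqref{eq:U-rho-l}. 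Concretely, the antisymmetry and the three-term relation in \eqref{eq:alg-comp-U-l} follow by swapping a pair of directions and by summing three instances of \eqref{eq:lin-dKP-i} around a triangle and cancelling the common nonzero vector $\boldsymbol{\phi}^\ell(n)$, while \eqref{eq:U-rho-l} is the cross-consistency condition guaranteeing the existence of $\boldsymbol{\phi}^\ell$; no fresh computation is needed beyond transcribing the original derivation with $N+1\mapsto\ell$.

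For the potential form \eqref{eq:rho-1}-\eqref{eq:rho-2} I would first record the involutivity identity
\[
\rho^i_\ell(n) = \left[\rho^\ell_i(n)\right]^{-1},
\]
which is the instance $\rho^i_\ell(n)\,\rho^\ell_i(n) = \rho^\ell_\ell(n) = 1$ of the composition rule \eqref{eq:ccc} together with the convention $\rho^i_i = 1$. Using it, the defining formula $U^\ell_{jk}(n)=\left[\rho^\ell_j(n)\right]^{-1}\rho^\ell_j(n+\boldsymbol{\varepsilon}^\ell_k)=\rho^j_\ell(n)\,\rho^\ell_j(n+\boldsymbol{\varepsilon}^\ell_k)$ turns the antisymmetry relation $U^\ell_{ij}(n)+U^\ell_{ji}(n)=0$ directly into \eqref{eq:rho-1}, and the three-term relation $U^\ell_{ij}(n)+U^\ell_{jk}(n)+U^\ell_{ki}(n)=0$ directly into \eqref{eq:rho-2}. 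Thus the two potential equations are simply the linear part of \eqref{eq:alg-comp-U-l} re-expressed through the $\rho^i_j$, the nonlinear relation \eqref{eq:U-rho-l} being already absorbed into the very existence of the potentials, exactly as in \eqref{eq:def-rho}.

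The only step that demands care---and which I regard as the genuine content of the argument---is the opening identification of \eqref{eq:lin-dKP-i} with \eqref{eq:lin-dKP}: one must be confident that $\boldsymbol{\phi}^\ell$ solves a full Desargues-type linear system on the simplices through $n$, not merely the three displayed equations, so that its compatibility (and hence \eqref{eq:U-rho-l}) is automatic. This is exactly what Lemma~\ref{lem:lin-i} supplies, since $\boldsymbol{\phi}^\ell$ is obtained from the Desargues map $\boldsymbol{\phi}^{N+1}$ by the invertible scalar gauge \eqref{eq:phi-i}, which preserves all collinearities and hence the Desargues-map property in the new sector; once this is granted the remainder is the routine substitution described above.
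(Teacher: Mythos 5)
Your proposal is correct and follows exactly the route the paper intends: the Corollary is stated as an immediate consequence of Lemma~\ref{lem:lin-i}, since the $\ell$-sector system \eqref{eq:lin-dKP-i} is formally identical to \eqref{eq:lin-dKP}, so \eqref{eq:alg-comp-U-l} follows by antisymmetrization and the triangle sum (cancelling the nonzero $\boldsymbol{\phi}^\ell(n)$), \eqref{eq:U-rho-l} is automatic from the potential form of $U^\ell_{jk}$, and \eqref{eq:rho-1}--\eqref{eq:rho-2} result from substituting $U^\ell_{ij}(n)=\rho^i_\ell(n)\,\rho^\ell_i(n+\boldsymbol{\varepsilon}^\ell_j)$ via the inversion identity $\rho^i_\ell=\left[\rho^\ell_i\right]^{-1}$ drawn from \eqref{eq:ccc}. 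Your closing observation that the genuine content lies in Lemma~\ref{lem:lin-i} guaranteeing the full sector system (not just isolated equations) matches the paper's logic precisely.
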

Notice that 
changing the sector can be understood as a symmetry transformation of
both the
linear and nonlinear systems. Other generators of symmetries are 
translations and permutations of indices within a fixed sector.

\subsection{The affine Weyl group action on the edge potentials}
\label{sec:aff-rho}

Let $E(A_N)$ denote the set of \emph{oriented} edges of the 
root lattice $Q(A_N)$, i.e.
elements of $E(A_N)$ are ordered
pairs $[n, n+ \boldsymbol{\varepsilon}^{i}_{j}]$,
where $n\in Q(A_N)$. Define the function $\rho:E(A_N)\to\DD$ by 
\begin{equation}
\rho([n, n+ \boldsymbol{\varepsilon}^{i}_{j}]) = \rho^i_j(n).
\end{equation}
It is convenient to distinguish the simple root functions 
$\rho^i = \rho^i_{i+1}$, $i=1, \dots, N$, which are attached to the simple 
roots directions. 
One can check, using the condition \eqref{eq:ccc}, that for $i< j$ we have
\begin{equation} \label{eq:r-ij}
\rho^i_j = \rho^{j-1} \dots \rho^i, \qquad \rho^j_i = (\rho^i_j)^{-1}.
\end{equation}
Let us define also the function $\rho^0$ as attached to the direction of the
root $\boldsymbol{\alpha}_0$,
which by \eqref{eq:r-ij} gives
\begin{equation}
\rho^0 = (\rho^N \rho^{N-1} \dots \rho^1)^{-1}.
\end{equation}

Define the action of the affine Weyl group on the functions $\rho^i_j$ through
its action on the oriented edges of the root lattice, i.e.
\begin{equation} \label{def:W-action}
(w . \rho ) ([n, n+ \boldsymbol{\varepsilon}^{i}_{j}]) = 
\rho(w^{-1}[n, n+ \boldsymbol{\varepsilon}^{i}_{j}]).
\end{equation} 
\begin{Prop} \label{prop:ri-action-rho}
The action of the generators $r_i$, $i=0,\dots ,N$, of the affine Weyl group on
the functions $\rho^j_k$, is given by 
\begin{equation} \label{eq:ri-rij}
(r_i . \rho^j_k)(n) = \rho^{\sigma_i(j)}_{\sigma_i(k)}(r_i(n)),
\end{equation}
where $\sigma$'s are the transpositions
$\sigma_i = (i, i+1)$, $i=1,\dots ,N$, and $\sigma_0 = (1,N+1)$.
\end{Prop}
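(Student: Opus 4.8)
The plan is to unwind the definition \eqref{def:W-action} of the group action directly, reducing everything to the known action of $r_i$ on oriented edges given in \eqref{eq:AN-action-edges}. By definition we have $(r_i . \rho^j_k)(n) = (r_i . \rho)([n, n+\boldsymbol{\varepsilon}^j_k]) = \rho(r_i^{-1}[n, n+\boldsymbol{\varepsilon}^j_k])$. Since each $r_i$ is an involution ($r_i^2 = 1$ from the Coxeter relations), we have $r_i^{-1} = r_i$, so the task is simply to compute $\rho(r_i[n, n+\boldsymbol{\varepsilon}^j_k])$ and match it to the claimed right-hand side $\rho^{\sigma_i(j)}_{\sigma_i(k)}(r_i(n))$.

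First I would observe that $r_i$ acts on an oriented edge by acting on both endpoints, so that $r_i[n, n+\boldsymbol{\varepsilon}^j_k] = [r_i(n), r_i(n+\boldsymbol{\varepsilon}^j_k)]$. The second endpoint is handled by formula \eqref{eq:AN-action-edges}, which states precisely that $r_i(n+\boldsymbol{\varepsilon}^j_k) = r_i(n) + \boldsymbol{\varepsilon}^{\sigma_i(j)}_{\sigma_i(k)}$, with the convention $\sigma_0 = (1,N+1)$. Therefore the image edge is $[r_i(n),\, r_i(n) + \boldsymbol{\varepsilon}^{\sigma_i(j)}_{\sigma_i(k)}]$, and applying the definition of $\rho$ on edges gives exactly $\rho([r_i(n), r_i(n)+\boldsymbol{\varepsilon}^{\sigma_i(j)}_{\sigma_i(k)}]) = \rho^{\sigma_i(j)}_{\sigma_i(k)}(r_i(n))$, which is the asserted identity.

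The main subtlety to verify is that this is genuinely a group action, i.e. that the edge map is compatible with the geometric $W(A_N)$-action and that formula \eqref{eq:AN-action-edges} indeed holds for \emph{all} generators including the affine reflection $r_0$; I would check the $r_0$ case separately by using the explicit formula \eqref{eq:r0-action} together with the identification $\boldsymbol{\alpha}_0 = -\tilde{\boldsymbol{\alpha}} = \boldsymbol{e}_{N+1} - \boldsymbol{e}_1$, confirming that $r_0$ acts on the $\boldsymbol{e}_i$-labels by the transposition $(1,N+1)$ up to the affine shift, which cancels out on edge differences. The only real obstacle is bookkeeping: one must make sure the index conventions $\boldsymbol{\varepsilon}^i_j = \boldsymbol{e}_i - \boldsymbol{e}_j$ and the identification $W_0(A_N) \cong S_{N+1}$ permuting the $\boldsymbol{e}_i$ are applied consistently, so that the permutation $\sigma_i$ acts on the \emph{upper and lower indices simultaneously}. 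Once \eqref{eq:AN-action-edges} is granted as stated earlier in the paper, the proposition is essentially immediate from chasing the definition.
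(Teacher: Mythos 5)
Your argument is correct and is precisely the paper's proof: the author likewise derives \eqref{eq:ri-rij} by combining the definition \eqref{def:W-action} with the edge-action formula \eqref{eq:AN-action-edges} and the involutivity $r_i^{-1}=r_i$, only stated in a single sentence. Your additional check of the $r_0$ case via \eqref{eq:r0-action} is sound bookkeeping but already subsumed in \eqref{eq:AN-action-edges} as stated in the paper (where $\sigma_0=(1,N+1)$ is included), so no new idea is involved.
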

\begin{proof}
The conclusion follows from equations \eqref{eq:AN-action-edges},
\eqref{def:W-action}, and from involutivity of the generators $r_i$.
\end{proof}
By equations \eqref{eq:AN-action-edges-simple} and \eqref{eq:r-ij} we have:
\begin{Cor}
The action of the generators $r_i$, $i=0,\dots ,N$, of the affine Weyl group on
the functions $\rho^j$, $j=0,\dots ,N$, is given by 
\begin{equation}
(r_i . \rho^j)(n) = [(\rho^i)^{-a_{ji}^U}\rho^j (\rho^i)^{-a_{ji}^L}](r_i(n)),
\end{equation}
where $a_{ji}^U$ and $a_{ji}^L$ are the "upper" and the "lower" parts
of the Cartan matrix of the affine Weyl group $W(A_N)$
\begin{equation}
a_{ij}^L = \left[ \begin{array}{rrrrr} 1 & 0 &  &  & -1 \\
-1 & 1 & 0 &  &  \\
 & -1 & 1 &  \ddots &     \\
 & & \ddots& \ddots & 0 \\
0 & & & -1 & 1 
\end{array} \right], \qquad
a_{ij}^U = \left[ \begin{array}{rrrrr} 1 & -1 &  &  & 0 \\
0 & 1 & -1 &  &  \\
 & 0 & 1 &  \ddots  &    \\
 & & \ddots &\ddots & -1 \\
-1 &  & & 0 & 1 
\end{array} \right].
\end{equation}
\end{Cor}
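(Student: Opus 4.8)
The plan is to reduce the statement to the edge–transformation rule \eqref{eq:AN-action-edges-simple} (equivalently to Proposition~\ref{prop:ri-action-rho}) combined with the expansion \eqref{eq:r-ij} of the two-index potentials into the simple-root functions. The first observation is that every simple-root function is itself a one-step edge potential: $\rho^j=\rho^j_{j+1}$ for $1\le j\le N$, while the affine function $\rho^0=(\rho^N\cdots\rho^1)^{-1}$ equals $\rho^{N+1}_1$, because $\boldsymbol{\alpha}_0=\boldsymbol{e}_{N+1}-\boldsymbol{e}_1=\boldsymbol{\varepsilon}^{N+1}_1$ and, by \eqref{eq:r-ij}, $\rho^{N+1}_1=(\rho^1_{N+1})^{-1}=(\rho^N\cdots\rho^1)^{-1}$. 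Thus all $\rho^j$, $0\le j\le N$, are treated uniformly within the edge formalism, including the affine one.

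First I would use the definition \eqref{def:W-action} of the action together with the involutivity $r_i^{-1}=r_i$ to write $(r_i.\rho^j)(n)=\rho\bigl(r_i[\,n,\,n+\boldsymbol{\alpha}_j\,]\bigr)$, and then apply \eqref{eq:AN-action-edges-simple} to identify the image as the edge $[\,r_i(n),\,r_i(n)+\boldsymbol{\alpha}_j-a_{ji}\boldsymbol{\alpha}_i\,]$ issuing from $m:=r_i(n)$. Since the displacement $\boldsymbol{\alpha}_j-a_{ji}\boldsymbol{\alpha}_i$ is again a root of $Q(A_N)$ — namely $\pm$ a simple root when $a_{ji}\in\{2,0\}$, or a sum of two adjacent simple roots when $a_{ji}=-1$ — the value of $\rho$ on this edge is a legitimate two-index potential $\rho^a_b(m)$, and the whole computation splits according to the three possible values $a_{ji}\in\{2,-1,0\}$, with the value $-1$ occurring for the two subcases $j=i\pm 1$.

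The bulk of the argument is then a short case analysis in which \eqref{eq:r-ij} converts $\rho^a_b$ into a monomial in the $\rho^k$. When $j=i$ the displacement is $-\boldsymbol{\alpha}_i$ and $\rho^{i+1}_i=(\rho^i)^{-1}$; when $j=i+1$ the displacement is $\boldsymbol{\varepsilon}^i_{i+2}$ and $\rho^i_{i+2}=\rho^{i+1}\rho^i$; when $j=i-1$ the displacement is $\boldsymbol{\varepsilon}^{i-1}_{i+1}$ and $\rho^{i-1}_{i+1}=\rho^i\rho^{i-1}$; and in the remaining cases $r_i$ fixes the edge direction, so the value is simply $\rho^j(m)$. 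Comparing with the asserted $(\rho^i)^{-a^U_{ji}}\rho^j(\rho^i)^{-a^L_{ji}}$ then amounts to reading off the entries of the two triangular summands of the Cartan matrix: the left factor is governed by $a^U$ and the right factor by $a^L$. This is exactly the non-commutative ordering produced by \eqref{eq:r-ij}, which is why $\rho^i$ enters on the left in one adjacency and on the right in the other.

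The point demanding the most care — and essentially the only genuine obstacle — is the affine wrap-around bookkeeping: one must treat $\rho^0$ and the reflection $r_0$ (with $\sigma_0=(1,N+1)$ and $\boldsymbol{\alpha}_0=\boldsymbol{\varepsilon}^{N+1}_1$) on the same footing, verifying that the cyclic adjacencies $0\sim 1$ and $0\sim N$ reproduce precisely the corner entries of $a^U$ and $a^L$. For instance, for $i=0,\,j=N$ one finds $\rho^N_1=\rho^0\rho^N$, matching $-a^U_{N0}=1$, $-a^L_{N0}=0$; and for $i=0,\,j=1$ one finds $\rho^{N+1}_2=\rho^1\rho^0$, matching $-a^U_{10}=0$, $-a^L_{10}=1$. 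Once these boundary cases are checked against the circulant structure of $a^U$ and $a^L$, the formula holds for all $0\le i,j\le N$, and the non-commutativity is automatically respected because each simple-root factor enters on the side dictated by the order in \eqref{eq:r-ij}.
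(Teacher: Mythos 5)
Your proposal is correct and takes essentially the same route as the paper, which derives this corollary in one line from equations \eqref{eq:AN-action-edges-simple} and \eqref{eq:r-ij}; your case analysis on $a_{ji}\in\{2,-1,0\}$ simply makes that derivation explicit. All the key identifications you use check out, including the boundary ones ($\rho^{i+1}_i=(\rho^i)^{-1}$, $\rho^i_{i+2}=\rho^{i+1}\rho^i$, $\rho^{i-1}_{i+1}=\rho^i\rho^{i-1}$, $\rho^N_1=\rho^0\rho^N$, $\rho^{N+1}_2=\rho^1\rho^0$), so the wrap-around bookkeeping for $r_0$ and $\rho^0$ is handled correctly.
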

The counterpart of Theorem \ref{th:AW-sym} on the level of the non-commutative
Hirota--Miwa system can be then stated as follows.
\begin{Prop}
Transformations of the potentials $\rho^i_j$ given in
equation \eqref{eq:ri-rij} generate 
the affine Weyl group $W(A_N)$ symmetry of solutions of the 
non-commutative Hirota-Miwa system \eqref{eq:rho-1}-\eqref{eq:rho-2}.
\end{Prop}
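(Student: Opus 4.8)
The plan is to separate two claims implicit in the statement: that the maps $r_i$ of \eqref{eq:ri-rij} realize the group $W(A_N)$, and that each of them sends solutions of \eqref{eq:rho-1}--\eqref{eq:rho-2} to solutions. The first claim is almost formal. The prescription \eqref{def:W-action} is the pull-back of functions on $E(A_N)$ through the geometric action of $W(A_N)$ on the oriented edges of the lattice, and pull-back by group elements is automatically a left action; Proposition \ref{prop:ri-action-rho} only records its value on the generators. Hence the assignment $w\mapsto(\rho\mapsto w.\rho)$ is a homomorphism, its generators satisfy every relation that holds in $W(A_N)$ (for instance, involutivity of the lattice map $r_i$ together with $\sigma_i^2=\mathrm{id}$ gives $r_i^2.\rho^j_k=\rho^j_k$), and so they generate a copy of the affine Weyl group acting on the edge potentials. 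It then remains to prove that this action is by symmetries of the nonlinear system.

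The conceptually cleanest route is geometric. A solution of \eqref{eq:rho-1}--\eqref{eq:rho-2} is, up to gauge, the family of edge potentials of a Desargues map $\phi$; by Theorem \ref{th:AW-sym} the composite $\phi\circ w$ is again a Desargues map for every $w\in W(A_N)$. I would check that the potentials attached to $\phi\circ w$ coincide with $w.\rho$, and then invoke the fact that the potentials of any Desargues map obey the non-commutative Hirota--Miwa system; this makes $w.\rho$ a solution.

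To be self-contained I would also give the direct verification. Writing a generic copy of \eqref{eq:rho-1} with distinct labels $a,b,c$ (where $c$ plays the role of $\ell$), substituting \eqref{eq:ri-rij}, and transporting the shifted arguments by \eqref{eq:AN-action-edges}, the left-hand side of that equation for $r_i.\rho$ at the point $n$ becomes, with $\sigma=\sigma_i$ and $m=r_i(n)$,
\[
\rho^{\sigma(a)}_{\sigma(c)}(m)\,\rho^{\sigma(c)}_{\sigma(a)}(m+\boldsymbol{\varepsilon}^{\sigma(c)}_{\sigma(b)})
+\rho^{\sigma(b)}_{\sigma(c)}(m)\,\rho^{\sigma(c)}_{\sigma(b)}(m+\boldsymbol{\varepsilon}^{\sigma(c)}_{\sigma(a)}),
\]
which is precisely the left-hand side of \eqref{eq:rho-1} for $\rho$ with the relabelled indices $\sigma(a),\sigma(b),\sigma(c)$ evaluated at $m$, and hence vanishes. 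The identical relabelling disposes of \eqref{eq:rho-2}. Thus the whole system is form-invariant under any simultaneous permutation of the index set $\{1,\dots,N+1\}$ together with the corresponding point map, and every generator $r_i$ acts in exactly this way through the transposition $\sigma_i$.

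The one step that is not purely formal, and which I expect to be the crux, is the admissibility of the transpositions $\sigma_N=(N,N+1)$ and $\sigma_0=(1,N+1)$ that move the distinguished index $N+1$. In the $\ZZ^N$ description fixed by \eqref{eq:Z-Q} the index $N+1$ singles out the reference sector, so the full $S_{N+1}$ relabelling symmetry of \eqref{eq:rho-1}--\eqref{eq:rho-2} needed to run the argument above for those generators is not visible from the outset. It is supplied precisely by Lemma \ref{lem:lin-i} and the self-consistency relation \eqref{eq:ccc}, which place all sectors on an equal footing; combined with the manifest invariance of the constant-coefficient system under the translational subgroup and the decomposition $W(A_N)=Q(A_N)\rtimes W_0(A_N)$, this promotes the relabelling symmetry to the entire affine Weyl group.
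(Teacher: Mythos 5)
Your proposal is correct, and it contains the paper's argument as one of its two routes while adding a more elementary one. The paper's entire proof is a single sentence: by Lemma~\ref{lem:lin-i} and Proposition~\ref{prop:ri-action-rho}, the action \eqref{eq:ri-rij} on the edge potentials is consistent with the lift $(r_i.\phi^j)(n)=\phi^{\sigma_i(j)}(r_i(n))$ to the wave functions, so the transformed potentials again arise from a linear problem of the form \eqref{eq:lin-dKP-i} and hence satisfy its compatibility conditions. This is exactly your ``geometric route,'' and you leave it at essentially the same level of detail as the paper does (``I would check that the potentials attached to $\phi\circ w$ coincide with $w.\rho$''). What you add, and what the paper omits, is the direct verification: since \eqref{eq:rho-1}--\eqref{eq:rho-2} is already written with all $N+1$ indices on an equal footing --- the symmetric form whose derivation is the content of Lemma~\ref{lem:lin-i} and \eqref{eq:ccc}, so you locate the crux correctly --- the system is manifestly invariant under any relabelling of $\{1,\dots,N+1\}$, and the equivariance \eqref{eq:AN-action-edges} of the reflections on oriented edges turns each generator's action into a relabelling composed with a point map; your displayed computation for \eqref{eq:rho-1} is right, and the case of \eqref{eq:rho-2} is identical. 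This buys something the paper's proof does not make explicit: it applies directly to an arbitrary solution of \eqref{eq:rho-1}--\eqref{eq:rho-2} without passing through wave functions or the genericity implicit in the Desargues-map correspondence, and together with your (correct) observation that \eqref{def:W-action} is a pull-back action, whence the Coxeter relations hold automatically, it is complete and self-contained. One small redundancy: since $r_0,\dots,r_N$ already generate $W(A_N)$ and you verify form-invariance for each of them, including $\sigma_0$ and $\sigma_N$, the closing appeal to translation invariance and the decomposition $W(A_N)=Q(A_N)\rtimes W_0(A_N)$ is not needed.
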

\begin{proof}
By Lemma~\ref{lem:lin-i} and Proposition~\ref{prop:ri-action-rho} we infer that
the action of the generators of the affine Weyl group on the edge potentials 
is consistent with the action on the wave functions $\phi^j$ given by 
\begin{equation}
(r_i . \phi^j)(n) = \phi^{\sigma_i(j)}(r_i(n)) .
\end{equation}
\end{proof}

\subsection{The $\tau$-functions} Among the edge potentials $\rho^i_j$ only $N$
of them enter nontrivially into the nonlinear system
\eqref{eq:rho-1}-\eqref{eq:rho-2}. They can be chosen by fixing a sector, i.e.
fixing the index $i$. Another choice is given by the simple root
potentials $\rho^i$, $i=1,\dots , N$. We give below another natural set of such basic
potentials (one of them will be redundant). 

Notice that condition \eqref{eq:ccc} allows to introduce functions
$\tau_i:Q(A_N)\to\DD$, $i=1, \dots , N+1$, such that
\begin{equation}
\rho^i_j(n) = \tau_j(n)\left[ \tau_i(n)\right]^{-1}.
\end{equation}
Equations \eqref{eq:rho-1}-\eqref{eq:rho-2} rewritten in terms of the
$\tau$-functions read
\begin{align} \label{eq:tau-comm}
[\tau_i(n)]^{-1} \tau_i(n + \boldsymbol{\varepsilon}^{\ell}_{j})
[\tau_\ell(n + \boldsymbol{\varepsilon}^{\ell}_{j})]^{-1} +
[\tau_j(n)]^{-1} \tau_j(n + \boldsymbol{\varepsilon}^{\ell}_{i})
[\tau_\ell(n + \boldsymbol{\varepsilon}^{\ell}_{i})]^{-1} = & 0,  \\
\label{eq:tau-comm-Hir}
\begin{split}
[\tau_i(n)]^{-1} \tau_i(n + \boldsymbol{\varepsilon}^{\ell}_{j})
[\tau_\ell(n + \boldsymbol{\varepsilon}^{\ell}_{j})]^{-1} + 
[\tau_j(n)]^{-1} \tau_j(n + \boldsymbol{\varepsilon}^{\ell}_{k}) 
[\tau_\ell(n + \boldsymbol{\varepsilon}^{\ell}_{k})]^{-1}  + &\\
[\tau_k(n)]^{-1} \tau_k(n + \boldsymbol{\varepsilon}^{\ell}_{i}) &
[\tau_\ell(n + \boldsymbol{\varepsilon}^{\ell}_{i})]^{-1} = 0.
\end{split}
\end{align}
It is not difficult to verify the following result.
\begin{Prop}
The action of the affine Weyl group on the edge potentials $\rho^i_j$ follows
from the action on the $\tau$-functions given by
\begin{equation}
(r_i . \tau_j)(n) = \tau_{\sigma_i(j)}(r_i(n)).
\end{equation}
\end{Prop}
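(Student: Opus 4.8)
The plan is to verify directly that the stated transformation rule for the $\tau$-functions reproduces, via the defining relation $\rho^i_j(n) = \tau_j(n)[\tau_i(n)]^{-1}$, the action on the edge potentials already established in Proposition~\ref{prop:ri-action-rho}. First I would recall that the affine Weyl group acts on any function on the root lattice by pullback along the lattice action; because $r_i$ is an involution we have $r_i^{-1} = r_i$, so for a fixed evaluation point this pullback is simply substitution of the argument $r_i(n)$. The proposed rule $(r_i . \tau_j)(n) = \tau_{\sigma_i(j)}(r_i(n))$ combines this substitution with the relabelling of the index by the transposition $\sigma_i$, in exact parallel with equation~\eqref{eq:AN-action-edges}.

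The central observation is that pullback along the lattice action is, for each fixed point $n$, a ring homomorphism of $\DD$: the values $\tau_k(r_i(n))$ and $\tau_j(r_i(n))$ lie in $\DD$ and multiply there, so the action commutes with taking inverses and with forming right quotients. Consequently, applying the transformation to the ratio defining $\rho^j_k$ gives
\begin{equation*}
(r_i . \rho^j_k)(n) = (r_i . \tau_k)(n)\,[(r_i . \tau_j)(n)]^{-1}
= \tau_{\sigma_i(k)}(r_i(n))\,[\tau_{\sigma_i(j)}(r_i(n))]^{-1}
= \rho^{\sigma_i(j)}_{\sigma_i(k)}(r_i(n)),
\end{equation*}
which is precisely formula~\eqref{eq:ri-rij} of Proposition~\ref{prop:ri-action-rho}. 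This establishes the claimed consistency.

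The only point requiring care --- and the step I expect to be the main obstacle --- is the non-commutativity of $\DD$. I would check that the order of the two factors in the ratio is respected: since both $\tau$-values are evaluated at the \emph{same} argument $r_i(n)$, the relabelling of indices and the substitution of the argument act independently of the multiplicative structure, so no reordering of non-commuting factors is introduced. One should also note that the $\tau_i$ are determined by the $\rho^i_j$ only up to a common right multiplicative gauge $\tau_i \mapsto \tau_i c$; this gauge is manifestly compatible with the proposed rule, since the common factor $c(r_i(n))$ cancels in every ratio, so the induced $\rho$-action is independent of the chosen lift of the $\tau$'s. Finally, the group-action axioms for the $\tau$-rule follow from those of the lattice action of $W(A_N)$ together with the involutivity $\sigma_i^2 = \mathrm{id}$, so no separate verification of the Coxeter relations is needed.
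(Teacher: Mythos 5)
Your verification is correct and is precisely the argument the paper leaves implicit (it says only that the result ``is not difficult to verify''): substituting the parametrization $\rho^j_k(n)=\tau_k(n)\left[\tau_j(n)\right]^{-1}$ into the proposed rule, and noting that both factors are evaluated at the same point $r_i(n)$ so no reordering of non-commuting elements of $\DD$ occurs, reproduces exactly formula \eqref{eq:ri-rij} of Proposition~\ref{prop:ri-action-rho}. Your supplementary remarks on the common right gauge freedom $\tau_i\mapsto\tau_i c$ and on the Coxeter relations are sound, though they go beyond what the statement requires.
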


\begin{Rem}
Notice that by equation \eqref{eq:phi-i} the function
\begin{equation}
(-1)^{(n|\boldsymbol{e}_i)}\boldsymbol{\phi}^{i}(n) 
\tau_i(n), \qquad i=1,\dots ,N+1,
\end{equation}
is independent of the index $i$.
\end{Rem}
\begin{Rem}
For $\DD$ commutative, i.e. a field, one can resolve equations 
\eqref{eq:tau-comm} by expressing $N$ $\tau$-functions in terms of one of them,
for example
\begin{equation}
\tau_i(n) = (-1)^{\sum_{\ell > i}n^{N+1}_\ell} 
\tau_{N+1}(n+\boldsymbol{\varepsilon}^{N+1}_{i}), \qquad
n = \sum_{\ell =1}^N n^{N+1}_\ell \boldsymbol{\varepsilon}^{N+1}_{\ell},
    \qquad i\neq N+1,
\end{equation} 
compare with equation \eqref{eq:r-tau}. Then the remaining equations 
\eqref{eq:tau-comm-Hir} reduce to the standard form \eqref{eq:H-M} of the Hirota-Miwa
system.
\end{Rem}
\section*{Acknowledgments}
I would like to thank  to Krzysztof Pra\.{z}mowski 
for information on binomial configurations and remarks on their combinatorial
and geometric structure. Special thanks
are to Masatoshi Noumi for his lectures on the elliptic Painlev\'{e} VI 
equation, during the programme \emph{Discrete Integrable Systems} in the
Isaac Newton Institute for Mathematical Sciences, and for subsequent discussions
which arouse again my interest in the incidence geometric characterization of the 
Painlev\'{e} systems and motivated the present research. I~would like also to
thank to an anonymous referee of my earlier paper~\cite{Dol-Des} who insisted on
more geometric characterization of Desargues maps then that given there.

\bibliographystyle{amsplain}

\begin{thebibliography}{10}

\bibitem{AblowitzSegur}
M. J. Ablowitz, H. Segur, \emph{Solitons and the Inverse Scattering Transform},
SIAM, Philadelphia, 1981.

\bibitem{BeukenhoutCameron-H}
F. Beukenhout, and P. Cameron, \emph{Projective and affine geometry over division
rings}, [in:] Handbook of incidence geometry, F. Beukenhout (ed.),
pp.~27--62, Elsevier, Amsterdam, 1995.



\bibitem{BoKo-N-KP}
L. V. Bogdanov, and B. G. Konopelchenko, \emph{Analytic-bilinear approach to
integrable hierarchies II. Multicomponent KP and 2D Toda hierarchies}, 
J. Math. Phys. \textbf{39} (1998) 4701--4728.

\bibitem{Bourbaki}
N. Bourbaki, \emph{Groupes et alg\`{e}bres de Lie, Chapitres 4, 5 et 6},
\'{E}lements de math\'{e}matique, Masson, Paris, 1981.



\bibitem{Cayley}
A. Cayley, \emph{Sur quelques th\'{e}or\`{e}mes de la g\'{e}om\'{e}trie de
position} (1846), Collected Mathematical Papers, vol. 1, 1889, pp. 317--328.



\bibitem{ConwaySloane}
J. H. Conway, N. J. A. Sloane, \emph{Sphere packings, lattices and groups},
Springer, 1988.

\bibitem{ConwaySloane-misc}
J. H. Conway, N. J. A. Sloane, \emph{The cell structures of certain lattices},
[in:] Miscellanea mathematica, P. Hilton, F. Hirzebruch, and R. Remmert (eds.),
pp. 71-107, Springer, 1991.

\bibitem{Coxeter-s-d}
H. S. M. Coxeter, \emph{Self-dual configurations and regular graphs}, Bull. Amer.
Math. Soc. {\bf 56} (1950) 413--455.

\bibitem{Coxeter}
H. S. M. Coxeter, \emph{Regular polytopes}, Dover, 1973.


\bibitem{DKJM}
E. Date, M. Kashiwara, M. Jimbo, and T. Miwa, \emph{Transformation groups for
soliton equations}, [in:] Nonlinear integrable systems --- classical theory and
quantum theory, Proc. of RIMS Symposium, M. Jimbo and T. Miwa (eds.), World
Scientific, Singapore, 1983, 39--119.


\bibitem{DJM-II}
E. Date, M. Jimbo, T. Miwa, \emph{Method for generating discrete soliton
equations. II}, J. Phys. Soc. Japan \textbf{51} (1982) 4125--31.



\bibitem{DCN}
A. Doliwa, {\it Geometric discretisation of the Toda system},
Phys. Lett. A {\bf 234} (1997) 187--192.

\bibitem{Doliwa-NEEDS-2007}
A. Doliwa, \emph{Discrete elliptic Toda system, cubic closest packing, and Miwa's
discrete BKP equation}, talk given at the Workshop \emph{Nonlinear Evolution
Equations and Dynamical Systems}, June 2007, L'Ametlla de Mar, Spain.

\bibitem{Dol-Des} 
A. Doliwa, \emph{Desargues maps and the Hirota--Miwa equation}, Proc. R. Soc. A
\textbf{466} (2010) 1177--1200.



\bibitem{MQL}
A.~Doliwa, P.~M. Santini, \emph{Multidimensional quadrilateral lattices 
are  integrable}, Phys. Lett. A \textbf{233} (1997), 365--372.


\bibitem{DNS-4-7} A. Doliwa, M. Nieszporski and P. M. Santini, 
{\it Integrable lattices and their sub-lattices II. From the B-quadrilateral 
lattice to the self-adjoint schemes 
 the triangular and the honeycomb lattices}, J. Math. Phys. {\bf 48} (2007) 
  113056.

\bibitem{TQL}
A. Doliwa, P. M. Santini and M. Ma{\~n}as,
\emph{Transformations of quadrilateral lattices}, J. Math. Phys. \textbf{41}
(2000)  944--990.

\bibitem{DMMMS}
A. Doliwa, M. Ma\~nas, L. Mart\'{\i}nez Alonso, E. Medina and P. M. Santini,
{\it Charged free fermions, vertex operators and
transformation theory of conjugate nets}, J. Phys. A {\bf 32} (1999)
1197--1216.

\bibitem{GilsonNimmoOhta}
C. R. Gilson, J. J. C. Nimmo, \emph{Quasideterminant
solutions of a non-Abelian Hirota--Miwa equation}, J. Phys. A: Mathe. Theor.
\textbf{40} (2007) 12607--12617.

\bibitem{GrammaticosRamani}
B. Grammaticos, A. Ramani, \emph{Discrete Painlev\'{e} equations: a review},
[in:] Discrete Integrable Systems, B. Grammaticos, Y. Kosmann-Schwarzbach, T.
Tanizhmani (eds.), Springer, Berlin 2004, pp. 245--321.

\bibitem{Grunbaum}
B. Gr\"{u}nbaum, \emph{Convex polytopes}, Second edition, Springer, 2003.

\bibitem{Hirota}
R. Hirota,
\textit{Discrete analogue of a generalized Toda equation},
J. Phys. Soc. Jpn. {\bf 50} (1981) 3785-3791.

\bibitem{Herrmann}
H. Herrmann, \emph{Vollst\"{a}ndig entflechtbare Konfigurationen und
Desargues-S\"{a}tze in projektiven R\"{a}umen}, Abh. Math. Sem. Univ. Hamburg
\textbf{17} (1951) 77--90.

\bibitem{Humphreys}
J. Humphreys, \emph{Reflection groups and Coxeter groups}, Cambridge University
Press, Cambridge, 1992.

\bibitem{JM}
M. Jimbo, T. Miwa, \emph{Solitons and infinite dimensional Lie algebras}, Publ.
RIMS, Kyoto Univ. \textbf{19} (1983) 943--1001.

\bibitem{KvL}
V. G. Kac, and J. van de Leur, \emph{The $n$-component KP hierarchy and
representation theory}, [in:] Important developments in soliton theory, (A. S.
Fokas and V. E. Zakharov, eds.) Springer, Berlin, 1993, pp. 302--343. 


\bibitem{KNY-qKP}
K. Kajiwara, M. Noumi, Y. Yamada, \emph{$q$-Painlev\'{e} systems arising from
q-KP hierarchy}, Lett. Math. Phys. \textbf{62} (2002) 259--268.

\bibitem{KoSchiefSDS-II}
B.~G. Konopelchenko, and W.~K. Schief, \emph{Conformal geometry of the (discrete)
Schwarzian Davey--Stewartson II hierarchy},  
  Glasgow Math. J  \textbf{47A}  (2005) 121--131.


\bibitem{Levi}
F. Levi, \emph{Geometrische Konfigurationen}, Hirzel, Leipzig, 1929.

\bibitem{LeBen}
D. Levi, R. Benguria, \emph{B\"{a}cklund transformations and nonlinear
differential-difference equations}, Proc. Nat. Acad. Sci. USA \textbf{77} (1980)
5025--5027.

\bibitem{Miwa} T.~Miwa, \emph{On Hirota's difference equations}, 
  Proc. Japan Acad. \textbf{58} (1982) 9--12.
  
\bibitem{MoodyPatera}
R. V. Moody, J. Patera, \emph{Voronoi and Delaunay cells of root lattices:
classification of their facets by Coxeter--Dynkin diagrams}, J. Phys. A: Math.
Gen. \textbf{25} (1992) 5089--5134.
 
\bibitem{Nieszporski-6p}
M. Nieszporski, \emph{Darboux transformations for 6-point scheme}, J. Phys. A:
Math. Theor. \textbf{40} (2007) 4193--4207.

\bibitem{Nieszporski-priv}
M. Nieszporski, private communication, February 2009.

\bibitem{FWN}
F. W. Nijhoff, \emph{Theory of integrable three-dimension nonlinear lattice
equations}, Lett. Math. Phys. \textbf{9} (1985) 235--241.


\bibitem{FWN-Pap}
F. W. Nijhoff and V. Papageorgiou, \emph{Similarity reductions of integrable
lattices and discrete analogues of Painlev\'{e} II equation}, Phys. Lett. a
\textbf{153} (1991) 337--344.

\bibitem{FWN-Capel}
F. W. Nijhoff and H. W. Capel, \emph{The direct linearization approach to
hierarchies of integrable PDEs in $2+1$ dimensions: I. Lattice equations and the
differential-difference hierarchies}, Inverse Problems \textbf{6} (1990)
567--590.  

\bibitem{FWN-RGO}
F. W. Nijhoff, A. Ramani, B. Grammaticos and Y. Ohta, \emph{On discrete
Painlev\'{e} equations associated with latticee KdV systems and the
Painlev\'{e} VI equation}, Studies Appl. Math. \textbf{106} (2001) 261--314.



\bibitem{Nimmo-KP}
J. J. C. Nimmo, \emph{Darboux transformations and the discrete KP equation}, 
J. Phys. A: Math. Gen. \textbf{30} (1997) 8693--8704. 

\bibitem{Nimmo-NCKP}
J. J. C. Nimmo, \emph{On a non-Abelian Hirota-Miwa equation}, 
J. Phys. A: Math. Gen. \textbf{39} (2006) 5053--5065. 

\bibitem{Noumi}
M. Noumi, \emph{Painlev\'{e} equations through symmetry}, AMS, Providence, 2004.

\bibitem{NY-AN}
M. Noumi, Y. Yamada, \emph{Affine Weyl groups, discrete dynamical systems and
Painlev\'{e} equations}, Commun. Math. Phys. \textbf{199} (1998) 281--295.


\bibitem{Prazmowska2}
M. Pra\.{z}mowska, \emph{On the existence of projective embeddings of
multiveblen configurations}, Bull. Belg. Math. Soc. Simon Stevin
\textbf{17} (2010) 259--273.

\bibitem{RGO}
A. Ramani, B. Grammaticos, Y. Ohta, \emph{A geometrical description of the
discrete Painlev\'{e} VI and V equations}, Commun. Math Phys. \textbf{217} 
(2001) 315-329.

\bibitem{Sakai}
H. Sakai, \emph{Rational surfaces associated with affine root systems and
geometry of the Painlev\'{e} equations}, Commun. Math. Phys.\textbf{220} 
(2001) 165--229.


\bibitem{Sauer2}
R.~Sauer, \emph{Projective Liniengeometrie}, de Gruyter, Berlin--Leipzig, 1937.

\bibitem{Schief-talk}
W. K. Schief, \emph{Discrete Laplace--Darboux sequences, Menelaus' theorem and
the pentagram map}, talk given at the Workshop \emph{Algebraic Aspects of
Discrete and Ultra-discrete Integrable Systems}, 30 March -- 3 April 2009, 
Glasgow UK, 
\texttt{http://www.newton.ac.uk/programmes/DIS/seminars/040309309.html}.

\bibitem{Schief-private}
W. K. Schief, private communication, June 2009.

\end{thebibliography}

\providecommand{\bysame}{\leavevmode\hbox to3em{\hrulefill}\thinspace}

\end{document}